\DeclareSymbolFont{bbold}{U}{bbold}{m}{n}
\DeclareSymbolFontAlphabet{\mathbbold}{bbold}
\newtheorem{lemma}{Lemma}[section]
\newtheorem{proposition}{Proposition}[section]
\newcommand{\appropto}{\mathrel{\vcenter{
  \offinterlineskip\halign{\hfil$##$\cr
    \propto\cr\noalign{\kern2pt}\sim\cr\noalign{\kern-2pt}}}}}
\newcommand\T{\rule{0pt}{3ex}}       
\title{The Nonuniversality of Wealth Distribution Tails\\ Near Wealth Condensation Criticality}
\author{Sam L. Polk\thanks{Corresponding Author. Present Address: 503 Boston Avenue, Medford, MA, USA. \newline \indent \; \textit{Email Addresses}: \url{Samuel.Polk@Tufts.edu} (Sam L. Polk), \url{Bruce.Boghosian@tufts.edu} (Bruce M. Boghosian)} }
\author{Bruce M. Boghosian}
\affil{Department of Mathematics, Tufts University}
\date{}
\begin{document}
\maketitle

\begin{abstract}
In this work, we modify the affine wealth model of wealth distributions to examine the effects of nonconstant redistribution on the very wealthy.  Previous studies of this model, restricted to flat redistribution schemes, have demonstrated the presence of a phase transition to a partially wealth-condensed state, or ``partial oligarchy,'' at the critical value of an order parameter. These studies have also indicated the presence of an exponential tail in wealth distribution precisely at criticality. Away from criticality, the tail was observed to be Gaussian. In this work, we generalize the flat redistribution within the affine wealth model to allow for an essentially arbitrary redistribution policy. We show that the exponential tail observed near criticality in prior work is, in fact, a special case of a much broader class of critical, slower-than-Gaussian decays that depend sensitively on the corresponding asymptotic behavior of the progressive redistribution model used.  We thereby demonstrate that the functional form of the tail of the wealth distribution in a near-critical society is not universal in nature but rather entirely determined by the specifics of public policy decisions. This is significant because most major economies today are observed to be near-critical.
\end{abstract}

\textbf{Keywords:}     Distribution Theory, Econophysics, Kinetic Theory, Pareto Distribution, Statistical Mechanics, \newline \indent Wealth Distributions.

\vspace{0.05in}

\textbf{ABS Subject Classifications:}     35Q84, 35Q91, 91B80

\section{Introduction}

\subsection{Motivation}

The search for a universal form for the distribution of wealth dates back over a century to the pioneering work of Vilfredo Pareto, who first posited that wealth distribution tails are decaying power laws~\cite{Pareto2, Pareto}. While this problem may seem like a simple matter of data fitting, modern work on the subject has become vastly more complicated for at least two key reasons.

The first problem is that it is no longer sufficient to fit wealth distribution tails to particular functional forms without some microscopic model to explain the origin of those forms.  Studies over the past two decades have focused on the construction of simple models of binary transactions that can account for the form of empirical wealth distributions~\cite{Angle, Chak, AWM}.  Relating those to agent density functions and other macroscopic observable quantities then requires advanced techniques of probability theory and statistical physics~\cite{PhysicaA}.

The second problem has to do with the paucity of wealth data.  Only about a sixth of the world's countries collect reliable wealth data on their household surveys. Moreover, studies of the asymptotic behavior of the tail of the wealth distribution are necessarily focused on obfuscated data due to a small minority of households reporting their wealth.  For example, to protect anonymity, the U.S. Survey of Consumer Finance does not list the wealth of any household holding more than \$100 million, and many other countries have followed suit~\cite{SCF}.

Thus, the problem of determining how the tail of a distribution of wealth behaves is still very much an open question. In this work, we demonstrate that attempts to isolate the tail of the wealth distribution for study are, by nature, problematic.  This is, in part, because the transport equations governing wealth distribution are integrodifferential -- and hence nonlocal -- in nature~\cite{PhysicaA, AWM}.  What is happening on the tail both determines and is determined by what is happening in bulk.  Moreover, the assumption that there exists a universal form for the tail of wealth distributions is valid only when those distributions are far from a certain critical point marking the onset of wealth condensation.  Closer to that critical point, the form of the tail is subject to the minutiae of national redistribution policies rather than to any universal law of wealth distribution.  Because many of the largest economies in the world today lie near this critical point, it follows that one should not expect their wealth distribution tails to have a universal form.

There are lasting implications of this work related to how wealth distributions are viewed as economic objects. This work implies that one cannot compare the distributions of any two societies without considering the policy decisions of months, years, and decades prior.  This work emphasizes the extreme importance of redistribution in determining the form of the tail of the wealth distribution.  Our work shows that even minor policy changes can be extremely influential in large-wealth asymptotics and that a better approach to modeling wealth distributions would be to consider redistribution policy as the key driving entity determining the form of the wealth distribution tail.

\subsection{Review of literature}

The model that we use in this work is an example of an \emph{ asset exchange model}, first introduced in the
1980s~\cite{Angle} and first analyzed using methods of statistical physics in the 1990s~\cite{yakovenko, ispolatov}.  These models posit simple binary, stochastic transactions between randomly chosen pairs of agents. Our model is best understood as the result of a historical sequence of such models leading up to it.

The yard sale model -- proposed by Chakraborti in 2002~\cite{Chak} -- is an asset exchange model in which the transferred wealth is proportional to the wealth of the poorer agent in a pairwise transaction.  The small, positive proportionality captures the plausible fact that agents tend not to stake a large fraction of their total wealth in a single transaction, which models a kind of risk aversion.  Remarkably, even when the winner of a transaction is chosen with even odds in this model, wealth accumulates in the possession of a single agent, whom we call the \emph{oligarch}.  This phenomenon of a finite fraction of societal wealth belonging to a vanishingly small fraction of agents was called \emph{wealth condensation} by Bouchaud and M\'{e}zard in 2000~\cite{Bouchaud} and subsequently studied further by Burda et al.~\cite{Burda}.  Chakraborti's result may seem counterintuitive because one would expect that a system relying on a fair coin to determine the winning agent should not confer an advantage to any one economic agent.  From an economic perspective, this result is very Keynesian in that it suggests that market forces are unstable at their core and require some level of exogenous redistribution to provide stability.

In 2014, a Boltzmann equation was derived for the general yard sale model~\cite{Bogh1}.  It was also shown that this equation reduces to a nonlinear integrodifferential Fokker-Planck equation, similar to the sort used in plasma kinetic theory, where the weak-transaction limit is analogous to the weak-collision limit~\cite{Rosen}. Later in 2014, this same universal Fokker-Planck equation was shown to be derivable by means of a stochastic process~\cite{Bogh2}. The yard sale model was then extended to include a flat redistribution scheme wherein every economic agent pays an amount proportional to his or her wealth and receives a benefit proportional to the average wealth in the economy~\footnote{Equivalently stated, each economic agent is moved a certain fraction of the way toward the mean.  Those below the mean move upward, while those above the mean move downward.  The process pays for itself, as global wealth is conserved.}~\cite{Bogh2, PhysicaA}. This work showed that the oligarchical time-asymptotic state described by Chakraborti is completely mitigated under even as simple a redistribution scheme as a positive flat one~\cite{PhysicaA}. The redistribution scheme used is loosely related to the  Ornstein–Uhlenbeck process~\cite{uhlenbeck1930theory} and is similar to models of wealth redistribution introduced by others~\cite{dragulescu2000statistical,during2008kinetic,kayser2020kinetic,kayser2017kinetic, lima2020nonlinear, toscani2009wealth, toscani2010wealth}. 

The concept of wealth-attained advantage (WAA), which replaces the fair coin used in determining the winner and loser of an interaction with one that favors the richer economic agent, was also introduced to the yard sale model in 2017~\cite{PhysicaA}. The yard sale model with both WAA and flat redistribution is called the \emph{extended yard sale model} (\emph{EYSM})~\cite{PhysicaA}.  WAA is similar in spirit to the weighted coin introduced by Moukarzel et al.\cite{moukarzel2007wealth}, except that in the EYSM, the advantage conferred to the richer agent is proportional to the difference in wealth between transacting agents rather than a constant advantage conferred to the poorer agent. This extension is perhaps best motivated by James Baldwin's memorable aphorism, ``Anyone who has ever struggled with poverty knows how extremely expensive it is to be poor''~\cite{Baldwin}. Manifestations of Baldwin's precept are pervasive, and one example is the substantial difference in mortgage interest rates seen by rich and poor economic agents. The work that introduced the EYSM also introduced the concept of \emph{wealth condensation criticality}:  a state where the redistribution parameter is equal to the WAA parameter.  It was shown that the oligarchical share of wealth depends sensitively on these two parameters, as will be discussed in more detail later in this work.  For now, suffice it to say that supercritical values of the WAA parameter -- that is, values above the critical value -- will result in a partial oligarchy; subcritical values of this parameter will result in no partial oligarchy at all.

Both the yard sale model and its extensions assume that agent density has support contained in the positive real numbers so that negative wealth is not possible by construction.  In 2016, however, 10.9 \% of households in the U.S. were estimated to have negative wealth (their liabilities outweighed their assets)~\cite{SCF}. Therefore, the addition of negative wealth was seen as an important generalization that needed to be made to the EYSM~\cite{AWM, silva2004temporal}.  In 2019, the affine wealth model (AWM) was introduced~\cite{AWM}.  The AWM assumes that there is some fixed, maximum amount of debt in an economy and modifies the EYSM accordingly.  When the AWM was fit to the Survey of Consumer Finances data on wealth distributions with the Forbes 400 between 1989 and 2016, it was highly successful at modeling the U.S. wealth distribution with an average pointwise error of less than or equal to 0.16\% for each fitting.  In this work, we will extend the flat redistribution that was assumed in this work in order to examine its effects on the phenomenology of the AWM at large wealth. 

\subsection{New results}

The primary goal of this study is to consider the implications of a more general, possibly nonconstant redistribution scheme to the distribution of large wealth in the context of the above-mentioned models.  We will begin by examining the properties of the asymptotic solution to the steady-state EYSM's Fokker-Planck equation under general redistribution and will discuss the ramifications of its behavior at large wealth.  We will then generalize the asymptotic solution obtained from the EYSM to include the possibility of negative wealth and thereby obtain results for the AWM.

As mentioned above, prior work has demonstrated that when asymptotically finite redistribution functions tend toward the WAA parameter, the oligarchical share of wealth exhibits a phase transition~\cite{Bruce_SIAM_19}.  In this work, we show that exactly how the redistribution function approaches the WAA parameter in the limit of large wealth determines the nature of the tail of the distribution of wealth in a near-critical society.  In particular, small alterations in redistribution policy can radically change the nature of a distribution of wealth near criticality.  In statistical physics, it is well known that the asymptotic behavior of distributions is extremely sensitive to model parameters near criticality, and this work provides the equivalent observation for wealth distribution models.

Additionally, by solving the ``inverse problem'' of fitting data obtained from the European Central Bank (ECB) to the AWM with constant redistribution parameters, we demonstrate that all of the fourteen ECB countries that we analyzed are near-critical~\cite{ecb_data}.  This implies that the tails of these countries' wealth distributions are prone to an extreme sensitivity to the policy decisions made in that society.  This strongly suggests that minute details of those countries' redistribution policies are more important to the shape of their wealth distributions than any universal economic principles. 

\subsection{Structure of this work}

In Section \ref{sec: EYSM}, we review notation and derive the nonlinear, integrodifferential Fokker-Planck equation for the EYSM with general redistribution. We then solve for the functional form of the tail of the wealth distribution of the EYSM with general redistribution. 

In Section \ref{sec: AWM}, we generalize the asymptotic form for the tail of the solution of the EYSM with general redistribution to include the possibility of negative wealth.  We then show that the phase transition that occurs in the EYSM with general redistribution also occurs in the AWM with general redistribution, and we review the derivation of the oligarchical fraction of wealth. 

In Section \ref{sec: Inverse Problem}, we investigate the inverse problem associated with the models in Section \ref{sec: AWM}.  We assume knowledge of the large-wealth, steady-state agent density function and solve for the redistribution function corresponding to that distribution of wealth. We then examine possible functional forms for the decay of the tail of the wealth distribution in the case of progressive redistribution.  We observe a sensitive dependence of the distribution of large wealth to the form of the redistribution function in near-critical economies. Finally, we  fit the flat redistribution AWM to ECB wealth data for fourteen countries and demonstrate that all are near-critical.

\section{General redistribution in the EYSM.}
\label{sec: EYSM}

\subsection{Notation and the steady-state Fokker-Planck equation}\label{sec: SteadyState FP}

In this section, we will generalize the EYSM by allowing for more general schemes of redistribution. The main goal will be to derive a Fokker-Planck equation~\cite{ fokker1914mittlere,kolmogoroff1931analytischen, planck1917satz, Risken} for the EYSM with general redistribution~\cite{PhysicaA}.  Briefly, a Fokker-Planck equation is a partial differential equation that describes time evolution of a distribution influenced by drag and random forces.  The form for the general Fokker-Planck equation for a distribution $P(w,t)$ is given by $ \frac{\partial P}{\partial t} = -\frac{\partial}{\partial w} [\sigma(w,t) P(w,t)] +\frac{1}{2} \frac{\partial^2}{\partial w^2} [D(w,t) P(w,t)]$, where $\sigma(w,t)$ and $D(w,t)$ are called the drift and diffusion coefficients, respectively. 

We  can  describe  a  wealth  distribution  in  the  context  of  the EYSM through the use of the agent density function, $P(w,t)$. In this section, we will assume all agents have nonnegative wealth, so that $P(w,t)$ has support contained by $[0,\infty)$~\cite{PhysicaA, AWM, silva2004temporal}. We will relax this assumption in Section \ref{sec: AWM}. We define the agent density function $P(w,t)$ to be a distribution such that $\int_a^b dw\; P(w,t)$ describes the number of economic agents with wealth between $a$ and $b$ at time $t$.  It follows that $\int_a^b dw\; P(w,t)w$ describes the total wealth of those agents.  Hence, the total population and total wealth can be derived from the agent density function: $N = \int_0^\infty dw\;P(w,t)$ and $ W= \int_0^\infty dw\;P(w,t)w$, respectively. In general, total population and wealth are conserved by variations of the yard sale model~\cite{PhysicaA,Chak}, which is why we do not attach time dependence to $N$ and $W$. While it may be of concern that total wealth is constant as a function of time, prior work has shown that inflating wealth in the yard sale model (either additively or multiplicatively) does not affect the shape of the agent density function in the limit of large time~\cite{Bogh1}. We let $\mu = W/N$ be the average wealth. 

We now will introduce the Pareto-Lorenz potentials, which will be of great use in our later derivations~\cite{lorenz1905methods,Pareto}. Vilfredo Pareto and Max Lorenz were among the first academics to try to characterize the distributional nature of wealth, and their contributions are still in use in the study of wealth distributions. Mathematically, the Pareto-Lorenz potentials are the zeroth through second incomplete moments of the agent density function:
\begin{align}
    A(w,t) :&= \frac{1}{N}\int_w^\infty dx\;P(x,t) \label{eq:Aw},\\
    L(w,t) :&= \frac{1}{W}\int_0^w dx\;P(x,t)x\label{eq:Lw},\\
    B(w,t) :&= \frac{1}{N}\int_0^w dx\; P(x,t)\frac{x^2}{2}.\label{eq:Bw}
\end{align}
The function $A(w,t)$ describes the fraction of agents with wealth greater than or equal to $w$ at time $t$. Similarly, $L(w,t)$ describes the fraction of wealth held by agents whose wealth is less than or equal to $w$ at time $t$. There is no easy economic interpretation for $B(w,t)$, but it will nonetheless be useful in our derivations. Because this section assumes nonnegative wealth, $A(w,t)$ uniformly decreases as wealth increases, while $L(w,t)$ and $B(w,t)$ uniformly increase with $w$. So long as the support of $P(w,t)$ is a subset of the nonnegative real numbers, the range of $A(w,t)$ and $L(w,t)$ is $[0,1]$. Notably, because $A$ and $L$ are ratios of agents and wealth, respectively, shifts in total population or total wealth are unlikely to affect the values these functions take.

In the EYSM, each economic agent gives a fraction of their wealth to a pool that is redistributed equally across the wealth spectrum. The rate paid for redistribution is assumed to be constant across the wealth spectrum in this model~\cite{PhysicaA}. However, wealth redistribution is typically nonconstant and usually progressive~\cite{lindert2017rise}. To generalize redistribution in the EYSM, we will now introduce the concept of a redistribution function $\chi(w)$, which is at the moment arbitrary apart from the assumption that $\chi(w)P(w,t)w$ is globally integrable for all $t$. The redistribution function $\chi(w)$ will be a function of wealth that returns the rate of redistribution paid by an agent with wealth $w$. It is easy to see that the total wealth collected for redistribution at time $t$ is given by $T(t)~=~\int_0^\infty dx\; P(x,t)\chi(x)x$.  At time $t$, we assume that an economic agent receives a benefit proportional to the average redistribution collected: $\frac{T(t)}{N}$. Note that at any given time, all redistribution is distributed among agents so that wealth is conserved. That is, redistribution moves from agents to agents and not from agents to a government body. Thus, redistribution is considered distinct from taxation in our model.

We can construct the Fokker-Planck equation for the EYSM with general redistribution by considering a transaction between an agent with wealth $\bar{w}$ and another agent with wealth $\bar{x}$ at time $t$. If a transaction takes place in time $\Delta t$, the agent with wealth $\bar{w}$ faces a change in wealth of
\begin{align}
  \Delta w = \bigg[\frac{T(t)}{N}-\chi(\bar{w})\bar{w}\bigg]\Delta t + \sqrt{\gamma \Delta t}\min\{\bar{w},\bar{x}\}\eta,\label{eq:GeneralRedistributionMicrotransaction}
\end{align}
where $\gamma$ describes the number of transactions per-unit-time and $\eta\in\{-1,1\}$ is a stochastic random variable to be discussed in more detail shortly. The value $\sqrt{\gamma \Delta t}$ provides the proportionality and time dependence needed in a given transaction, as will become clear soon.

To include the concept of WAA into this model~\cite{PhysicaA}, we shift the expectation of $\eta$:
\begin{align}
  E[\eta] &= \zeta \sqrt{\frac{\Delta t}{\gamma}}\;\frac{\bar{w}-\bar{x}}{\mu}.\label{eq:WAADefinition}
\end{align}
It follows from $\eta\in\{-1,1\}$ that $E[\eta^2] = 1$. Note that, in  (\ref{eq:WAADefinition}), the expectation of $\eta$ is proportional to the difference in wealth between the two transacting agents and that we have introduced the proportionality constant $\zeta>0$, which is a per-unit-time constant that quantifies the level of WAA in a society. Under this construction, a larger $\zeta$ corresponds to a society in which wealthier agents are given a greater advantage in transactions, while agents of different wealth transact on a more level playing field when $\zeta$ is close to 0. 

Using (\ref{eq:GeneralRedistributionMicrotransaction})-(\ref{eq:WAADefinition}), we can derive the Fokker-Planck equation for the EYSM with general redistribution. Following earlier work~\cite{Bogh2,PhysicaA}, we use the following notation for the bivariate expectation of a function of both $\eta$ and $\bar{x}$ at time $t$:
\begin{align*}
    \mathcal{E}[f(\eta,\bar{x})] = \int_0^\infty d\bar{x}\; P(\bar{x},t) E[f(\eta,\bar{x})],
\end{align*}
where $E[f(\eta,\bar{x})]$ denotes the expectation of $f(\eta,\bar{x})$ with respect to $\eta$. Using this notation, we are now able to derive the drift and diffusion coefficients of the Fokker-Planck equation for the EYSM with general redistribution,
\begin{align*}
\begin{split}
\sigma(w,t) = \lim_{\Delta t\rightarrow 0} \mathcal{E}\bigg[\frac{\Delta w}{\Delta t}\bigg]  &= \bigg(\frac{T}{N}-\chi(w)w \bigg)- \zeta\bigg[\frac{2}{\mu}\bigg(B-\frac{w^2}{2}A\bigg) + (1-2L)w\bigg]
\end{split},\\
     D(w,t) = \lim_{\Delta t\rightarrow 0} \mathcal{E}\bigg[\frac{(\Delta w)^2}{\Delta t}\bigg]&= 2\gamma \bigg(B+ \frac{w^2}{2}A\bigg),
\end{align*}
where $A$, $L$, and $B$ are the Pareto-Lorenz potentials defined in (\ref{eq:Aw})-(\ref{eq:Bw}). For ease of notation, we drop functional dependence, but at this point, $P$, $A$, $L$, and $B$ are to be understood as functions of both wealth and time, and $T$ is to be understood as a function of just time. 

We thus arrive at the Fokker-Planck equation of the EYSM~\cite{PhysicaA}, generalized to allow for arbitrary redistribution schemes:
\begin{align}
\begin{split}
\frac{\partial P}{\partial t} =& -\frac{\partial}{\partial w}\bigg[\bigg(\frac{T}{N}-\chi(w)w\bigg)P\bigg] + \frac{\partial}{\partial w}\bigg\{\zeta\bigg[ \frac{2}{\mu}\bigg(B-\frac{w^2}{2}A\bigg) + (1-2L)w\bigg]P\bigg\}\\
 &\quad +  \frac{\partial^2}{\partial w^2}\bigg[\gamma \bigg(B+\frac{w^2}{2}A\bigg)P\bigg].
\end{split}\label{eq:FokkerPlanckwithGamma}
\end{align}
Because $\gamma$ is independent of wealth and time, we can divide  (\ref{eq:FokkerPlanckwithGamma}) by $\gamma$ and absorb it into the timescale by sending $\gamma t$ to $t$~\cite{PhysicaA}. Furthermore, we let $\bar{\chi}(w)=\frac{1}{\gamma}\chi(w)$ and $\bar{\zeta}=\frac{1}{\gamma}\zeta$. These simplifications change the units of redistribution and WAA to a per-transaction basis instead of per-unit-time. Henceforth, we will forgo the barring of $\chi(w)$ and $\zeta$ for the sake of notation, but both of these quantities are to be understood as being on a per-transaction basis. Using these simplifications, we arrive at the following nonlinear integrodifferential equation for agent density:
\begin{align}
\begin{split}
\frac{\partial P}{\partial t} =& -\frac{\partial}{\partial w}\bigg[\bigg(\frac{T}{N}-\chi(w)w\bigg)P\bigg] + \frac{\partial}{\partial w}\bigg\{\zeta\bigg[ \frac{2}{\mu}\bigg(B-\frac{w^2}{2}A\bigg) + (1-2L)w\bigg]P\bigg\}\\
 &\quad + \frac{\partial^2}{\partial w^2}\bigg[ \bigg(B+\frac{w^2}{2}A\bigg)P\bigg].
\end{split}\label{eq:FokkerPlanck}
\end{align}
We set the time derivative of  (\ref{eq:FokkerPlanck}) to zero and integrate once with respect to wealth to obtain the following first-order nonlinear nonlocal ordinary differential equation describing the steady-state wealth distribution:
\begin{align}
    \frac{d}{dw}\bigg[\bigg(B+\frac{w^2}{2}A\bigg) P \bigg] = \bigg\{\frac{T}{N}-\chi(w)w - \zeta\bigg[\frac{2}{\mu}\bigg(B-\frac{w^2}{2}A\bigg) + (1-2L)w\bigg]\bigg\}P. \label{eq:FokkerPlanckSteadyState}
\end{align}
Note that in  (\ref{eq:FokkerPlanckSteadyState}), we have dropped dependence on time and have used total derivatives with respect to wealth. We will henceforth follow this precedent because we will only be dealing with steady-state wealth distributions for the duration of this work. While agent density is likely to change as time evolves, recent work has showed that, in a reasonably stable economy, the time derivative of $P$ can be treated  as a higher-order perturbation of the steady-state wealth distribution~\cite{sheng}. 

\subsection{Large wealth analysis of the steady-state Fokker-Planck equation}\label{sec: EYSM large w-2.2}

\subsubsection{Assumptions on agent density at large wealth}\label{sec: Assumptions}

We now establish notation that will be used extensively throughout this work. We will use the notation $g(w)\ll h(w)$ to mean
\begin{align}
    \lim_{w\rightarrow \infty }\frac{g(w)}{h(w)} = 0.\label{eq:AsymptoticDominance}
\end{align}
An equivalent notation often used for this functional relationship is $g(w) = o[h(w)]$. Furthermore, we will say that $g(w) \approx h(w)$ if 
\begin{align}
    \lim_{w\rightarrow \infty }\frac{g(w)}{h(w)} = 1.\label{eq:AsymptoticEquality}
\end{align}
We now will provide a set of assumptions that will enable us to approximate the large-wealth behavior of agent density. Using the above notation, we make the a priori assumption that the redistribution function $\chi(w)$ satisfies
\begin{align*}
   \frac{d\chi}{dw}\ll w[\chi(w) + \alpha w+\beta]^2
\end{align*}
for any real constants $\alpha$ and $\beta$. Note that this condition is very general and that functional forms for $\chi(w)$ ranging from arbitrary polynomials to the exponential of arbitrary polynomials will satisfy it. 

We now make some a posteriori assumptions on the distribution of large wealth. Assume that 
\begin{align}
    P(w) \approx C e^{-f(w)} + cW \Xi(w),\label{eq: P_Definition}
\end{align}
where $f(w)$ is a twice-differentiable, asymptotically monotone function defined on $(M,\infty)$ -- for some $M>0$ -- satisfying the following conditions:
\begin{align}
    f'(w) &>0, \label{eq:Assumption1}\\
    f'(w) &\gg \sqrt{f''(w)},\label{eq:Assumption2}\\
    e^{f(w)} &\gg \frac{w^2}{[f'(w)]^2}. \label{eq:Assumption3}
\end{align}
We assume that $C$ is a positive constant of integration~\cite{PhysicaA}. We will use only redistribution functions $\chi(w)$ for which (\ref{eq:Assumption1})-(\ref{eq:Assumption3}) are valid. These constraints are very lax and are satisfied, inter alia, by all functions of the form  $f(w) \approx  w^p\log^q(w),$ where either $p>0$ and $q\in \mathbb{R}$ or $p=0$ and $q>1$. This is shown in Appendix \ref{sec: satisfactory functions}.

The function $\Xi(w)$ given in  (\ref{eq: P_Definition}) is a generalized distribution that was introduced in prior work to represent the oligarchical fraction of wealth in an economy~\cite{PhysicaA, Bruce_SIAM_19}. There is a broad literature on the behavior of $\Xi(w)$, which we leave to the reader to investigate further. Briefly, however, $\Xi(w)$ satisfies
\begin{align*}
    \int_0^\infty dw\; \Xi(w) &= 0,\\
    \int_0^\infty dw\; \Xi(w)w &= 1,\\
    \int_a^b dw\; \Xi(w)w &= 0
\end{align*}
for any $a,b\in\mathbb{R}$. The constant $c$ is a number between 0 and 1 representing the fraction of wealth in the possession of an oligarch. This constant will be analyzed in further detail in Section \ref{sec: EYSM Agent Density 2.2.2}.

\subsubsection{Agent density at large wealth}\label{sec: EYSM Agent Density 2.2.2}

In this section, we will derive an approximate form for $f(w)$ based on the assumptions given in (\ref{eq:Assumption1})-(\ref{eq:Assumption3}).  Prior work leads us to posit that it is the behavior of redistribution at large wealth which enables or inhibits oligarchy~\cite{PhysicaA,Bogh6, Bruce_SIAM_19, silva2004temporal}. By better understanding the functional form of agent density for the wealthiest agents in the distribution, we hope to better understand which redistribution policies are sufficient to preclude oligarchy. The following lemma is necessary to make the approximations which serve as the foundation for this work. It can easily be proven through  (\ref{eq:AsymptoticEquality}), an application of L'H\^{o}pital's rule, and the consequences of our assumptions, which are provided in Appendix \ref{sec: consequences of assumptions}. 
\begin{lemma}\label{lemma:ParetoPotentialApproximation}
Under the assumptions listed in (\ref{eq:Assumption1})-(\ref{eq:Assumption3}), when \newline wealth is sufficiently large and $m\geq 0$,
\begin{align*}
    \int_w^\infty dx\; x^m \exp[-f(w)] &\approx \frac{w^m}{f'(w)}\exp[-f(w)].
\end{align*}
\end{lemma}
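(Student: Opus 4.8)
The plan is to work directly from the definition of asymptotic equality in~\eqref{eq:AsymptoticEquality}: writing
\begin{align*}
    g(w) := \int_w^\infty dx\; x^m e^{-f(x)}, \qquad h(w) := \frac{w^m}{f'(w)}\,e^{-f(w)},
\end{align*}
the lemma is the assertion that $g(w)/h(w)\to 1$ as $w\to\infty$, and I would establish this with L'H\^opital's rule. First I would confirm that the rule applies: $g$ and $h$ are positive and both tend to $0$. For $h$ this is because $e^{-f(w)}$ overwhelms $w^m/f'(w)$; for $g$ it is because $x^m e^{-f(x)}$ is integrable at infinity under~\eqref{eq:Assumption1}--\eqref{eq:Assumption3}, so $g(w)$ is the tail of a convergent integral. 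These facts, together with $w f'(w)\to\infty$, are exactly the consequences of the assumptions collected in Appendix~\ref{sec: consequences of assumptions}, which I would cite. Since $f$ is twice differentiable on $(M,\infty)$ and $f'>0$, both $g$ and $h$ are differentiable for large $w$.

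The computation is then immediate. By the fundamental theorem of calculus $g'(w) = -w^m e^{-f(w)}$, while the logarithmic derivative of $h$ gives $h'(w)/h(w) = m/w - f''(w)/f'(w) - f'(w)$, so that
\begin{align*}
    h'(w) = -\,w^m e^{-f(w)}\left[\,1 + \frac{f''(w)}{[f'(w)]^2} - \frac{m}{w\,f'(w)}\,\right].
\end{align*}
By~\eqref{eq:Assumption2} we have $f''(w)/[f'(w)]^2\to 0$, and since $w f'(w)\to\infty$ the term $m/(w f'(w))$ also vanishes (and is identically $0$ when $m=0$); hence the bracket tends to $1$, which in particular shows $h'(w)\neq 0$ for all sufficiently large $w$, as L'H\^opital's rule requires. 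Therefore
\begin{align*}
    \frac{g'(w)}{h'(w)} = \frac{1}{\,1 + \dfrac{f''(w)}{[f'(w)]^2} - \dfrac{m}{w\,f'(w)}\,} \;\longrightarrow\; 1,
\end{align*}
and L'H\^opital's rule yields $g(w)/h(w)\to 1$, which is~\eqref{eq:AsymptoticEquality} for $g$ and $h$.

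The genuine content here is the preliminary bookkeeping needed to invoke L'H\^opital's rule -- that $g$ is finite (hence $g\to 0$), that $h\to 0$, and that $w f'(w)\to\infty$ -- all of which I would take from Appendix~\ref{sec: consequences of assumptions}; after that, the proof is the one-line differentiation above, so I expect no serious obstacle. For robustness one could instead replace the L'H\^opital step by a single integration by parts, $\int_w^\infty x^m e^{-f(x)}\,dx = \frac{w^m}{f'(w)}e^{-f(w)} + \int_w^\infty \big(\frac{m x^{m-1}}{f'(x)} - \frac{x^m f''(x)}{[f'(x)]^2}\big)e^{-f(x)}\,dx$, and then check that the remainder integral is $\ll \frac{w^m}{f'(w)}e^{-f(w)}$ using~\eqref{eq:Assumption2} and $w f'(w)\to\infty$; but the L'H\^opital route is cleaner and matches the proof sketch indicated in the text.
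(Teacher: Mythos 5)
Your proof is correct and follows exactly the route the paper indicates for Lemma~\ref{lemma:ParetoPotentialApproximation}: verify the $0/0$ form using the consequences of (\ref{eq:Assumption1})--(\ref{eq:Assumption3}) from Appendix~\ref{sec: consequences of assumptions} (in particular $e^{f(w)}\gg w^k$ and $wf'(w)\to\infty$), then apply L'H\^opital's rule and kill the $f''/(f')^2$ and $m/(wf')$ terms via (\ref{eq:Assumption2}) and Lemma~\ref{lemma: A4}. The paper only sketches this argument, so your write-up is a faithful (and slightly more careful) version of the same proof.
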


Note that we can define $L(w)$ and $B(w)$ in terms of integrals that are considered by Lemma \ref{lemma:ParetoPotentialApproximation}. In particular, at equilibrium,
\begin{align*}
    L(w) &= L_\infty-\frac{1}{W}\int_w^\infty dx\; P(x)x,\\
    B(w) &= B_\infty - \frac{1}{N}\int_w^\infty dx\;  P(x)\frac{x^2}{2},
\end{align*}
where $L_\infty$ and $B_\infty$ are the complete first and second moments of $P(w)$ with respect to $w$. Prior work has shown these to be finite numbers~\cite{PhysicaA}. Intuitively, one would expect $L_\infty$ to be 1 because $W$ is defined to be the first complete moment of $P$ with respect to wealth. In fact, this reasoning holds even for the asymptotic behavior of $A(w)$, and it can be proven that $A(w)$ tends toward zero as $w\rightarrow\infty$. However, numerical simulations and analytic studies have shown that this orderly convergence does not hold for $L(w)$~\cite{ PhysicaA, Bruce_SIAM_19}. It has been shown that, for asymptotically constant redistribution functions $\chi(w) \approx \chi_\infty >0$,
\begin{align}
    L_\infty :&= \lim_{w\rightarrow \infty} L(w)  = \begin{cases} 1 & \text{ if }\zeta\leq\chi_\infty,\\
    \frac{\chi_\infty}{\zeta} &\text{ if }\zeta>\chi_\infty
    \end{cases}\label{eq:L_inf_EYSM_specific}
\end{align}
\cite{PhysicaA, Bruce_SIAM_19}. 
This phenomenon was shown to be due to a second-order phase transition observed at criticality -- a state defined by $\chi_\infty = \zeta$ -- that is due to $\Xi(w)$~\cite{PhysicaA, Bruce_SIAM_19}. Prior work observed that if $\zeta>\chi_\infty$ -- a state called \emph{supercritical} -- there is a partial oligarch with fraction of total wealth $c=1-\frac{\chi_\infty}{\zeta}$. Conversely, if $\zeta<\chi_\infty$ -- a state called \emph{subcritical} -- there is no partial oligarchy whatsoever ($c=0$). This relationship between redistribution and WAA explains the duality exhibited in $L_\infty$ in  (\ref{eq:L_inf_EYSM_specific}). In both subcritical and supercritical distributions, the distribution of large wealth is observed to be Gaussian. However, if $\chi_\infty=\zeta$ -- a state called \emph{critical} -- the oligarchical fraction of wealth drops to zero, and the distribution of large wealth is observed to be exponential~\cite{PhysicaA, Bruce_SIAM_19}. 

Reducing the steady-state Fokker-Planck equation requires significant algebra, which is provided in Appendix \ref{sec: Approximations}. The final result is that  (\ref{eq:FokkerPlanckSteadyState}) reduces to
\begin{align}
    f(w) &\approx \frac{1}{B_\infty}\int^w dx\; \chi(x)x + \frac{\zeta(1 - 2L_\infty)}{2B_\infty}w^2 +\frac{2\zeta B_\infty  - \frac{T}{N}\mu }{B_\infty \mu}w.\label{eq:LargeWealthLogPw}
\end{align}
The lower limit of integration in the redistributive term of  (\ref{eq:LargeWealthLogPw}) is omitted because it is a subdominant constant of integration. Note that for any given redistribution function $\chi(w)$ and WAA parameter $\zeta$ satisfying our assumptions, we can use   (\ref{eq:LargeWealthLogPw}) to find the distribution of large wealth. We emphasize that this form of $f(w)$ is both an extension and a corroboration of prior research on the EYSM at large wealth~\cite{PhysicaA}. In that work, the steady-state Fokker-Planck equation's asymptotic solution was found to be
\begin{align}
    f(w)\approx \frac{|\chi-\zeta|}{2B_\infty}w^2 +  \frac{2\zeta B_\infty  - \chi\mu^2 }{B_\infty \mu}w, \label{eq:LargeWealthLogPw_ConstantChi}
\end{align}
which is derivable from (\ref{eq:L_inf_EYSM_specific}) and (\ref{eq:LargeWealthLogPw}) if redistribution is assumed to be constant~\cite{PhysicaA}.


\section{General redistribution in the AWM}\label{sec: AWM}

In Section \ref{sec: EYSM}, we solved for the large-wealth distribution of agent density under the assumption that wealth is nonnegative. However, economic agents with negative wealth are widely observed in real-world data. For example, in 2016, 10.9 \% of the population of the U.S. was estimated to have negative wealth~\cite{bricker2017changes,SCF}. In this section, we extend our earlier generalization of redistribution in the EYSM to a generalization of redistribution in the AWM. We will use the work of Section \ref{sec: EYSM} as the basis for this extension. 

The AWM is a recently-introduced asset exchange model that allows for negative wealth. The AWM has been highly successful at modeling empirical wealth data~\cite{AWM}. It is based on the EYSM but allows the support of the agent density function to be contained in $[-\Delta,\infty)$, where $\Delta\geq 0$ is the fixed maximal value of debt in an economy. Thus, economic agents can forfeit more wealth than their net worth, but only up to a set limit. The AWM was constructed so that transacting agents have the same economic relationship to each other as if their wealth was shifted upward by $\Delta$. In particular, before a given transaction, economic agents add $\Delta$ to the wealth. They will therefore have positive wealth for the duration of the transaction, and the results of the EYSM and Section \ref{sec: EYSM} will apply. At the end of the transaction, $\Delta$ is subtracted from both agents' wealth. Thus, the AWM shifts the agent density function of the EYSM on the wealth axis by a factor of $\Delta$. 

In this section, we will follow prior notation and bar quantities that are used within the context of the EYSM~\cite{AWM}. We will let the unbarred quantities refer to their AWM equivalents. For example, $\bar{P}(w)$ will denote the steady-state agent density function explored in Section \ref{sec: EYSM}, and  $P(w)$ will denote the steady-state ``shifted wealth'' agent density function of the AWM. We assume that $P(w)$ has support contained within $[-\Delta,\infty)$. The following algebraic manipulation can be easily observed, linking $P(w)$ and $\bar{P}(w)$:
\begin{align}
    P(w) = \bar{P}(w+\Delta). \label{eq:EYSM_to_AWM_Pw}
\end{align}
If we make the same assumptions on agent density as in Section \ref{sec: Assumptions} -- given in (\ref{eq:Assumption1})-(\ref{eq:Assumption3}) -- we arrive at an approximate form for $\bar{f}(w) \approx -\log[\bar{P}(w)] $, given by  (\ref{eq:LargeWealthLogPw}). By (\ref{eq:LargeWealthLogPw}) and (\ref{eq:EYSM_to_AWM_Pw}), the distribution of large wealth is given by
\begin{align*}
    f(w) \approx& -\log[P(w)]-\log(C)\nonumber\\
    \approx& \bar{f}(w+\Delta)-\log(C) \nonumber\\
    \begin{split}
    \approx& \frac{1}{\bar{B}_\infty}\int^{w+\Delta} dx\; \chi(x)x + \frac{\zeta(1 - 2\bar{L}_\infty)}{2\bar{B}_\infty}(w+\Delta)^2 +\frac{2\zeta \bar{B}_\infty  - \frac{T}{N}\bar{\mu} }{\bar{B}_\infty \bar{\mu}}(w+\Delta)\\&\quad-\log(C) \nonumber.
    \end{split}
\end{align*}
Next, we will expand the quadratic term and group by the power in wealth,
\begin{align}
    \begin{split}
        =& \frac{1}{\bar{B}_\infty}\int^{w+\Delta} dx\; \chi(x)x + \frac{\zeta(1 - 2\bar{L}_\infty)}{2\bar{B}_\infty}w^2 +\frac{2\zeta \bar{B}_\infty  - \frac{T}{N}\bar{\mu} + \zeta \Delta \bar{\mu}(1 - 2\bar{L}_\infty) }{\bar{B}_\infty \bar{\mu}}w\\
        &\quad-\log(C), \label{eq: LargeWealthLogPw_AWM}
    \end{split}\\
    =& \bar{f}(w) +\frac{1}{\bar{B}_\infty}\int_w^{w+\Delta} dx\; \chi(x)x +  \frac{\zeta\Delta (1-2\bar{L}_\infty)}{\bar{B}_\infty}w, \label{eq: LogPw_correction_terms}
\end{align}
where, without loss of generality, the constant term has been absorbed into the constant of integration $C$. Hence, for any redistribution function $\chi(w)$ and WAA parameter $\zeta$, there exists a function $f(w)$ such that the agent density decays according to $P(w) \approx C e^{-f(w)} + c W \Xi(w)$. In  (\ref{eq: LogPw_correction_terms}), we provide the AWM distribution at large wealth in terms of the EYSM distribution with correction terms.

In Section \ref{sec: EYSM Agent Density 2.2.2}, we reviewed the results of prior work that showed the existence of a second-order phase transition in the EYSM. This phase transition in the oligarchical share of wealth occurred whenever the limit of the  redistribution $\chi_\infty$ is less than $\zeta$~\cite{PhysicaA,Bogh6, Bruce_SIAM_19}. An analogous result holds for the AWM. We will devote the rest of this section reviewing the derivation of the oligarchical share of wealth in the AWM. For ease of notation, we introduce the parameter $\lambda\geq0$, which is defined implicitly by $\Delta = \lambda\bar{\mu}$. Thus, a larger $\lambda$ corresponds to a larger maximum value of debt in the model. 

Let $\bar{L}(w)$ be the Pareto-Lorenz potential defined in  (\ref{eq:Lw}) for the EYSM, and let $L(w)$ be its AWM equivalent. These two integral operators can be shown to be related in the following way:
\begin{align*}
    L(w) = (1+\lambda )\bar{L}(w+\Delta)-\lambda [1-\bar{A}(w+\Delta)],
\end{align*}
where $\bar{A}(w)$ is the EYSM Pareto-Lorenz potential defined in  (\ref{eq:Aw})~\cite{AWM}. Note that $\bar{A}(w+\Delta)$ and $\bar{L}(w+\Delta)$ are Pareto potentials in the EYSM. Therefore, as $w\rightarrow\infty$, $\bar{A}(w+\Delta)\rightarrow 0$ and $\bar{L}(w+\Delta)\rightarrow \bar{L}_\infty$: the asymptotic limit of $\bar{L}(w)$ described in  (\ref{eq:L_inf_EYSM_specific}). Hence,
\begin{align}
   L_\infty = \lim_{w\rightarrow\infty} \bigg\{(1+\lambda )\bar{L}(w+\Delta)-\lambda [1-\bar{A}(w+\Delta)]\bigg\} = (1+\lambda )\bar{L}_\infty -\lambda.\label{eq: L_inf_AWM_general}
\end{align}

By  (\ref{eq: L_inf_AWM_general}), the fraction of wealth held by the oligarch in the AWM is closely related to that which was introduced in the EYSM. Moreover, due to the inclusion of $\bar{L}_\infty$ in  (\ref{eq: L_inf_AWM_general}), the AWM exhibits a similar phase transition to that of the EYSM. Suppose that $\chi_\infty\geq \zeta$, as would be the case in a subcritical or critical economy. By (\ref{eq:L_inf_EYSM_specific}) and (\ref{eq: L_inf_AWM_general}), 
\begin{align*}
    L_\infty = (1+\lambda)\times 1 - \lambda = 1.
\end{align*}
This implies that the oligarchical share of wealth is zero. Suppose that $\chi_\infty<\zeta$, as would be the case in an EYSM supercritical economy. By (\ref{eq:L_inf_EYSM_specific}) and (\ref{eq: L_inf_AWM_general}), 
\begin{align*}
    L_\infty = (1+\lambda) \frac{\chi_\infty}{\zeta} - \lambda.
\end{align*}
Hence, the fraction of wealth held by the oligarch is  $c= 1-L_\infty = (1+\lambda) \Big(1-\frac{\chi_\infty}{\zeta}\Big).$ In particular,  the fraction of wealth held by the partial oligarch is magnified by the amount of negative wealth in a society. These results are summarized in the following equation:
\begin{align}
    L_\infty = \begin{cases} 1 & \chi_\infty \geq \zeta,\\
    (1+\lambda) \frac{\chi_\infty}{\zeta} - \lambda & \chi_\infty<\zeta.
    \end{cases} \label{eq:L_inf_AWM_specific}
\end{align}
Note that if $\lambda = 0$, as would be the case if all wealth were nonnegative,  (\ref{eq:L_inf_AWM_specific}) reverts back to the asymptotics of $\bar{L}(w)$ in the EYSM. Hence, this argument truly does generalize the EYSM with general redistribution to include the possibility of negative wealth.


\section{Inverse problem}\label{sec: Inverse Problem}

\subsection{Redistribution as a function of large-wealth agent density}\label{sec: Redistribution Derivation}

After deriving  (\ref{eq: LargeWealthLogPw_AWM}), we observed that for any redistribution function $\chi(w)$ and WAA parameter $\zeta$, there exists a function $f(w)$ such that agent density decays according to $P(w) \approx C e^{-f(w)} + cW\Xi(w)$, where $c$ depends on the limit of $\chi(w)$ relative to $\zeta$. It is worth noting that the converse of this statement may not be true. In this section, we will delve into the inverse problem to Section \ref{sec: AWM}. In particular, given that policymakers would like the distribution of large wealth to obey $P(w)\approx C e^{-f(w)}+cW\Xi(w)$, what redistribution policy should they follow? Is the distribution they have in mind even possible within the context of the AWM? In this section, we aim to provide a formal answer to these questions. 

Throughout this section, we will assume knowledge of a function  $f(w)$ -- twice differentiable on $(M,\infty)$ for some $M>-\Delta$ -- such that agent density at large wealth is of the form $P(w)\approx C e^{-f(w)}+ cW\Xi(w)$. Thus, we allow for the possibility of oligarchy in our a priori assumption on $P(w)$. However, because we are primarily concerned with the distribution of large wealth and not that of oligarchs, we will not delve into the deep and interesting literature on $\Xi(w)$. We refer any readers interested in the oligarch's contribution to the distribution of wealth to prior work on this subject~\cite{Bogh6, Bruce_SIAM_19}. For now, the constant $c=1-L_\infty$ can be thought of as a parameter to be tuned by legislators when drafting policy. Many of the following derivations will be analogous to those in Section \ref{sec: EYSM}. The important distinction is that in Section \ref{sec: EYSM}, we assumed knowledge of $\chi(w)$ and an a priori asymptotic form $P(w)\approx C e^{-f(w)}+cW\Xi(w)$ for some function $f(w)$ that satisfies (\ref{eq:Assumption1})-(\ref{eq:Assumption3}). In this section, we assume knowledge of a $P(w)$ satisfying (\ref{eq:Assumption1})-(\ref{eq:Assumption3}) -- and hence the asymptotic form $P(w)\approx C e^{-f(w)} + cW\Xi(w)$ -- but make the a priori assumption that there is a $\chi(w)$ that will produce the asymptotic distribution of $P(w)$. 

As in Section \ref{sec: SteadyState FP}, we set the time derivative of the EYSM's Fokker-Planck equation with general redistribution --  (\ref{eq:FokkerPlanck}) -- equal to zero and integrate once with respect to wealth to obtain  (\ref{eq:FokkerPlanckSteadyState}). Note that Lemma \ref{lemma:ParetoPotentialApproximation} is still applicable for asymptotic approximations, as any EYSM agent density function that we will consider satisfies the conditions stated in (\ref{eq:Assumption1})-(\ref{eq:Assumption3}) by assumption. Then the approximations that were made in Appendix \ref{sec: Approximations} to derive the asymptotic form of $\bar{f}(w)$ from the steady-state EYSM Fokker-Planck equation are still valid, and we arrive at  (\ref{eq:LargeWealthLogPw}), as in Section \ref{sec: EYSM Agent Density 2.2.2}. We then can apply the transformation discussed in Section \ref{sec: AWM} to obtain a form for $f(w)$: the negative logarithm of the AWM agent density function. At this point, we rearrange  (\ref{eq: LargeWealthLogPw_AWM}) to find that the redistribution function $\chi(w)$ must satisfy
    \begin{align*}
   \begin{split}
       \int^{w+\Delta} dx\; \chi(x)x =& \bar{B}_\infty \bar{f}(w) + \frac{\zeta( 2\bar{L}_\infty-1)}{2}w^2 +\frac{2\zeta \bar{B}_\infty  - \frac{T}{N}\bar{\mu} + \zeta \Delta \bar{\mu}(1 - 2\bar{L}_\infty) }{\bar{\mu}}w
   \end{split}
   \end{align*}
   if it exists for a given distribution $P(w)$. Applying the fundamental theorem of calculus and dividing by $w$, we find that $\chi(w)$ must have the asymptotic form
   \begin{align}
       \chi(w+\Delta) =& \zeta(2\bar{L}_\infty - 1) + \bar{B}_\infty \frac{\bar{f}'(w)}{w} + \frac{2\zeta \bar{B}_\infty  - \frac{T}{N}\bar{\mu} + \zeta \Delta \bar{\mu}(1 - 2\bar{L}_\infty) }{\bar{\mu}}\frac{1}{w}. \label{eq:Redistribution1}
\end{align}
We can thus find $\chi(w)$ by inputting $w-\Delta$ on the right hand side of  (\ref{eq:Redistribution1}). We now will apply a Maclaurin expansion of $\chi(w)$ about $\Delta$,
\begin{align*}
    \begin{split}
        \chi(w) =& \zeta(2\bar{L}_\infty - 1) + \bar{B}_\infty \frac{\bar{f}'(w)}{w} + \frac{2\zeta \bar{B}_\infty  - \frac{T}{N}\bar{\mu} }{\bar{\mu}}\frac{1}{w}+ O\bigg(\frac{\Delta}{w^2}\bigg),
    \end{split}
\end{align*}
where we have used big-$O$ notation to refer to subdominant terms in the Taylor expansion. We assume that $w$ is very large so that $\Delta$ is small relative to $w$. Then $\Delta$ is even smaller compared to $w^2$, and we may approximate $\chi(w)$ by its leading terms,
\begin{align}
    \chi(w) \approx& \zeta(2\bar{L}_\infty - 1) + \bar{B}_\infty \frac{\bar{f}'(w)}{w} + \frac{2\zeta \bar{B}_\infty  - \frac{T}{N}\bar{\mu}  }{\bar{\mu}}\frac{1}{w}.\label{eq:Redistribution2}
\end{align}
The reader may be concerned that $\chi(w)$ is defined in terms of $T$, which is itself a functional of $\chi(w)$. We note that for all redistribution functions we consider relevant, the total redistribution will be constant. For this reason, we assume that it is an arbitrary constant for the purpose of solving  (\ref{eq:Redistribution2}) and assume that its value can be set once $\chi(w)$ and $P(w)$ are known. 

 (\ref{eq:Redistribution2}) implies that we can describe asymptotic redistribution as the sum of a constant and some function of wealth. In particular, we let 
\begin{align*}
    \chi(w) :&\approx \zeta(2\bar{L}_\infty -1) + \iota(w),
\end{align*}
where $\iota(w)$ is defined by
\begin{align*}
    \iota(w) :&\approx \bar{B}_\infty \frac{\bar{f}'(w)}{w} + \frac{2\zeta \bar{B}_\infty  - \frac{T}{N}\bar{\mu}  }{\bar{\mu}}\frac{1}{w}.
\end{align*}
We note that the behavior of $\iota(w)$ -- as the sole nonconstant contribution to redistribution at large wealth -- will dictate the large-wealth behavior of redistribution. At this point, we are free to consider the implications of the derived form of $\chi(w)$. To do this, we introduce some important terminology. Assume  $P(w)\approx C e^{-f(w)} + cW\Xi(w)$, where $f(w)$ is a differentiable function. We say that $P(w)$ decays \emph{subquadratically} if $f'(w)\ll w$. Similarly, we say that $P(w)$ decays \emph{quadratically} if $f'(w)\approx a w$ for some $a\neq 0$. Finally, we say that $P(w)$ decays \emph{superquadratically} if $f'(w)\gg w$.

\subsection{Criticality in the AWM with general redistribution} \label{sec: Criticality}

Prior work on the AWM observed a critical relationship in the case where the redistribution function $\chi(w)$ is asymptotically constant, tending toward a limit we shall call $\chi_\infty$~\cite{Bruce_SIAM_19}. When $\chi_\infty<\zeta$, agent density was observed to decay like a Gaussian, and there was a partial oligarchy with share of wealth $c=1-\frac{\chi_\infty}{\zeta}$.  By contrast, when $\chi_\infty>\zeta$, agent density decayed like a Gaussian with no oligarchy whatsoever ($c=0$). However, when $\chi_\infty=\zeta$ -- a state introduced earlier as \emph{criticality} -- agent density was observed to decay exponentially with no oligarch.  In this subsection, we will show that the critical exponential decay that was observed in prior work is actually a special case of a more general family of subquadratic decays~\cite{PhysicaA, Bruce_SIAM_19}. We will prove that there exists a family of subquadratic decays with redistribution $\chi(w) \approx \zeta$ and no oligarchy. 

Suppose that $f(w)$ is a twice-differentiable function on $(M,\infty)$ for some $M>-\Delta$, that it satisfies (\ref{eq:Assumption1})-(\ref{eq:Assumption3}), and that $f'(w)\ll w$: a condition which corresponds to agent density decaying subquadratically. It is clear from  (\ref{eq:Redistribution1}) that $\iota(w)\rightarrow 0$ as $w\rightarrow\infty$, so that $\chi(w)\rightarrow\zeta(2\bar{L}_\infty-1)$. However, the construction of $\chi(w)$ is such that all higher-order (quadratic and linear) terms are canceled in the asymptotic form for $f(w)$. So, $\iota(w)$ contributes nonnegligibly to large-wealth agent density. Prior work has dealt only with constant redistribution schemes. Therefore the critical exponential distribution -- the case in which $\chi(w) = \chi = \zeta$ and $\bar{L}_\infty = 1$ -- seemed like the unique subquadratic decay satisfying these properties~\cite{Bruce_SIAM_19}. The above argument extends this idea to a family of subquadratic distributions. We have assumed nothing about the existence or nonexistence of oligarchy in the subquadratic case, but note that if $\bar{L}_\infty = 1$, this argument describes redistribution functions $\chi(w)$, tending toward $\zeta$ as $w\rightarrow\infty$, which will produce subquadratic decays that are not exponential.

We have shown that many subquadratic decays are possible by allowing nonconstant redistribution and that to attain such a distribution of large wealth, $\chi(w)$ must tend toward $\zeta(2\bar{L}_\infty-1)$ as $w\rightarrow\infty$.  However, the way that $\chi(w)$ approaches this limit warrants further discussion. Suppose that a government is aiming for a subquadratic distribution of large wealth $C_1 e^{-g(w)}$ but that the limit of redistribution it aims for differs by a small margin from $\zeta(1-2\bar{L}_\infty)$. In particular, if $\epsilon\in\mathbb{R}$ is some small, possibly negative constant, suppose that $\chi(w)\rightarrow \zeta(2\bar{L}_\infty -1)+\epsilon$ as wealth becomes large. In this case, Section \ref{sec: Redistribution Derivation} implies that agent density will behave like $C_2 e^{-f(w)} + cW\Xi(w)$, where 
\begin{align}
    f(w) \approx  \frac{1}{B_\infty }\int^{w+\Delta} dx \;\bigg[ \epsilon\; x +   \bar{B}_\infty g'(x) \bigg] \label{eq:EpsilonArgument}
\end{align}
and $c = 1-L_\infty$. We have assumed that $g'(w)\ll w$, so  (\ref{eq:EpsilonArgument}) will at some point be well approximated by an order-$w^2$ term. However, there may be a section of the wealth distribution where the contribution from $g(x)$ competes with $\epsilon \; w^2 $ if $\epsilon$ is sufficiently small. 

Importantly, the point at which the linear term in the integrand of  (\ref{eq:EpsilonArgument}) dominates the $g'(w)$ term may be near the end of the wealth spectrum, where the discretization of agent density will make it irrelevant. By continuity, there is a point $w_\epsilon$ at which $g'(w_\epsilon) = \epsilon w_\epsilon$. At this point, the two terms in the integrand of  (\ref{eq:EpsilonArgument}) become comparable, but for wealth much lower than this, $P(w)\approx C_1 e^{-g(w)}$. Similarly, for wealth much greater than $w_\epsilon$, agent density will decay like a Gaussian, and the theory of prior work on criticality applies~\cite{Bruce_SIAM_19, AWM}. However, if it is true that $NA(w_\epsilon)<1$, where $A(w)$ is defined to be the AWM equivalent to the Pareto-Lorenz potential given in  (\ref{eq:Aw}), there will be no economic agents with wealth greater than $w_\epsilon$. This argument shows that the discretization of agent density allows the limit of redistribution to differ from the WAA parameter while still attaining a subquadratic distribution of wealth.

The redistribution function corresponds to the policy choices of a society, and this analysis shows that those choices are of the utmost importance for economies near criticality. The redistribution function has been broken up into its constant and nonconstant contributions. In the case of a subquadratic decays in agent density, the nonconstant contribution $\iota(w)$ tends toward zero as $w\rightarrow\infty$. However, the argument of this section has shown that the way in which $\iota(w)$ tends toward zero indicates the nature of the subquadratic decay which is attained. Hence, when an economy is near-critical, the qualitative nature of its distribution of large wealth is extremely sensitive to the trivialities of redistribution policy. 

\subsection{Asymptotic redistribution functions for common tails}\label{sec: Redistribution Functions}

The work of Section \ref{sec: Criticality} implies that it is possible to attain any subquadratic tail given some redistribution function. In the literature on wealth distribution tails, many forms have been fitted to empirical data. In this section, we will provide the asymptotic redistribution function necessary to attain common distributions in the study of tails of wealth distributions. To do this, we first convert a probability density function to an agent density function by multiplying by $N$. We then consider the dominant term or terms in the asymptotic form of $f(w) \approx -\log(P)$. From this, it is easy to derive the asymptotic redistribution function from  (\ref{eq:Redistribution2}). We will assume that if $P$ has a subquadratic decay, then $L_\infty = 1$, as was the case in the EYSM and AWM~\cite{PhysicaA, Bruce_SIAM_19}. These results are provided in Table \ref{tab:Redistribution functions}. 

\begin{table}[t]
{\footnotesize
  \caption{{ The redistribution functions for six classes of distributions that are commonly observed within the study of wealth economics. } {The redistribution function $\chi(w)$ is derived from  (\ref{eq:Redistribution2}). We have used in the case of the exponential distribution that the $\frac{D}{w}$ term must cancel due to  (\ref{eq:LargeWealthLogPw}). We also assume for higher-order Gaussian distributions that $m\in(1,\infty)$. For ease of notation, we define the constant $D$ implicitly by  $\iota(w)=B_\infty \frac{f'(w)}{w} + \frac{D}{w}$. We also define $C$ to be some normalization constant and forgo the barring of  $B_\infty$ and $\mu$. \normalfont}} \label{tab:Redistribution functions}
\begin{center}
\begin{tabular}{|c|c|c|c|} \hline
 \T Distribution&Agent Density Function &Asymptotic $f(w)$ & $\chi(w)$\\[2.5ex] \hline
\hline
   \T Exponential& $C \exp[-\lambda w]$ & $\lambda w$ & $\zeta $\\[2.5ex]
    \hline
   \T Lognormal   & $ C\frac{1}{w}\exp\big[-\frac{[\ln(w)^2-1]^2}{2\sigma^2}\big]$ & $\frac{1}{\sigma^2}\log(w)^2$ & $\zeta + \frac{D}{w}+ \frac{2B_\infty }{\sigma^2} \frac{\log(w)}{w^2} $ \\[2.5ex]
    \hline
    \T Pareto      & $C{w^{-(\alpha+1)}}$ & $(\alpha+1)\log(w)$ & $\zeta +\frac{ D}{w}+ \frac{(\alpha+1)B_\infty}{w^2}  $ \\[2.5ex]
    \hline
    \T Inverse-gamma & $Cw^{-(\alpha+1)}\exp\big[\frac{-\beta}{w}\big]$  & $(\alpha+1)\log(w) + \frac{\beta}{w}$ &   $\zeta  +\frac{ D}{w}+ \frac{(\alpha+1)B_\infty-\beta B_\infty}{w^2}$ \\[2.5ex]
    \hline
    \T \T Gaussian    & $C\exp\big[-\frac{(w-\mu)^2}{\sigma^2}\big]$ & $\frac{w^2}{\sigma^2}$ & $\lim_{w\rightarrow\infty}\chi(w) \neq \zeta $ \\[2.5ex]
    \hline
    \begin{tabular}{@{}c@{}}Higher-order \\ Gaussian\end{tabular}&  $C\exp\big[-\frac{(w-\mu)^{2m}}{\sigma^{2m}}\big]$ &  $\frac{w^{2m}}{\sigma^{2m}}$ & $\frac{2mB_\infty}{\sigma^{2m}}w^{2m-2}$\\[2.5ex]
    \hline
 \end{tabular}
\end{center}
}
\end{table}

The first four rows of Table \ref{tab:Redistribution functions} consist of subquadratic decays. Notably, each of the redistribution functions necessary to attain these wealth distribution tails differ from one another by solely a subdominant term on the order of $\frac{1}{w}$ or $\frac{1}{w^2}$. This result is all the more evident when comparing the Pareto and inverse-gamma distributions' redistribution functions at large wealth. The necessary redistribution functions for these two different decays vary by a factor of only $\frac{-\bar{B}_\infty\beta}{w^2}$. This result emphasizes the importance of the results in Section \ref{sec: Criticality}. The minute details of redistribution policy have a dynamic effect on the shape of wealth distributions in their tails when near wealth condensation criticality.

\subsection{Application to ECB data} \label{sec: ECB fittings}

In this section, we will present the fittings of the AWM with flat redistribution to the empirical wealth data provided by the ECB on the distributions of wealth of European countries~\cite{ecb_data}. The data sets we used came from the first wave of the Eurosystem Household Finance and Consumption Network's Household Finance and Consumption Survey, which provides detailed information on the wealth of over 62,000 households across fifteen Euro-area countries~\cite{ecb_data_methodology}. All households were surveyed between 2008 and 2010. The Household Finance and Consumption Survey is made available to researchers on reasonable request from the ECB. While households from Slovakia were also surveyed as a part of the Household and Consumption Survey, we did not analyze their data because the Slovakian response rate was not published in the methodological report~\cite{ecb_data_methodology}.

The Lorenz curve, denoted $\mathcal{L}(F)$, is a parametric plot of $L(w)$ against $F(w)=1-A(w)$~\cite{lorenz1905methods}. This is a curve in the unit square, parameterized by the wealth $w$, that can be shown to be concave up and lie below the diagonal. A point $(f,l)$ on the Lorenz curve tells us that a fraction $f$ of economic agents hold a fraction $l$ of wealth. The farther the Lorenz curve is from the diagonal of the unit square, the more inequality exists in an economy. Inequality can be quantified using the Gini coefficient $G$, which is defined to be two times the area under the Lorenz curve~\cite{gini1921measurement}. A  Gini coefficient of 0 would therefore represent a total oligarchy, and a Gini coefficient of 1 would represent a uniform distribution of wealth. 

First, we set $\chi(w)=\chi$ to be a constant function in this section. Constant redistribution schemes are frequently used approximations for more complicated redistribution schemes that are used in real-world economies~\cite{bisi2009kinetic}. Our goal will be to consider how close the redistribution parameter $\chi$ is to the WAA parameter $\zeta$. If these parameters are close for a given economy, this should indicate that the economy is near-critical. In this formulation, our parameter space is $\theta = \{\chi,\zeta, \lambda\}$. Let $\mathcal{L}(F)$ be the empirical Lorenz curve and $\mathcal{L}_\theta(F)$ be the theoretical (model) Lorenz curve of the agent density function given by the AWM with parameters $\theta = \{\chi,\zeta,\lambda\}$~\cite{AWM}. We define the \emph{discrepancy} by 
\begin{align}
    J(\theta)= \int_0^1 dF\; |\mathcal{L}(F)-\mathcal{L}_\theta(F)|. \label{eq:discrepancy}
\end{align}
Thus, $J(\theta)$ is the $L^1$ norm of the difference between the empirical Lorenz curve and that of the AWM with parameter choices $\theta$. The fittings to ECB data were performed by minimizing $J(\theta)$ over $\theta$. There are no guarantees for the concavity of $J(\theta)$, so we employed a global numerical search for the optimal parameters. The optimal values for $\chi$, $\zeta$, and $\lambda$ are given in Table \ref{tab:Fittings}. We let $G_\text{fit}$ refer to the Gini coefficient of the AWM-fitted Lorenz curve. 

\begin{table}[t]
{\footnotesize
  \caption{ \label{tab:Fittings} { Optimal redistribution, WAA, and negative wealth parameters for the AWM when fitted to fourteen ECB countries' empirical wealth data.} The values $\chi_\text{opt}$, $\zeta_\text{opt}$, and $\lambda_\text{opt}$ are defined to be the optimal parameters for the AWM fitting to a given country's ECB wealth data. $G_\text{fit}$ is the Gini coefficient of the corresponding wealth distribution, as obtained by the AWM~\cite{AWM}. We also report the average local error for each fitting. }
\begin{center}
\begin{tabular}{|c|c|c|c|c|c|} \hline
Country& $\chi_\text{opt}$ & $\zeta_\text{opt}$ & $\lambda_\text{opt}$ &  $G_\text{fit}$ & Average \\
& & & & & Local Error\\ \hline      \hline
 Austria       &   0.156 & 0.182   &  0.185    &  0.763      &    0.28 \% \\ \hline
      Belgium       &   1.406 & 1.514   & 0.577     &  0.589 &   0.28 \%   \\\hline
      Cyprus        &   0.164 & 0.190   & 0.096     &  0.690 &     0.23 \% \\\hline
      Germany       &   0.162 & 0.184   & 0.199     &  0.759 &   0.33 \%   \\\hline
      Spain         &   1.568 & 1.728   & 0.502     &  0.568 &   0.21 \%   \\\hline
      Finland       &   0.972 & 1.000   & 0.639     &  0.665 &     0.39 \% \\\hline
      France        &   0.556 & 0.608   & 0.286     &  0.673 &     0.48 \% \\\hline
      Greece        &   1.944 & 2.000   & 0.650     &  0.553 &     0.22 \% \\\hline
      Italy         &   1.194 & 1.300   & 0.502     &  0.601 &    0.34 \%  \\\hline
      Lithuania     &   0.896 & 1.066   & 0.425     & 0.658  &   0.31 \%   \\\hline
      Malta         &   1.154 & 1.348   & 0.377     & 0.583  &      0.18 \% \\  \hline
      Netherlands   &   1.676 & 1.516   & 0.992     & 0.647  &    0.28 \%   \\\hline
      Portugal      &0.564  &   0.678   & 0.309     & 0.672  &     0.17 \%  \\\hline
      Slovenia      & 1.978 &   1.998   & 0.618     & 0.529  &    0.15 \%  \\\hline
 \end{tabular}
\end{center}
}
\end{table}

It is notable that our data fittings of the AWM found all fourteen European countries to be near-critical (Figure \ref{fig:Chi_Zeta_plot}). The theoretical work in Section \ref{sec: Criticality} showed that when an economy is near-critical, the qualitative nature of the distribution of large wealth depends sensitively on redistribution policy. We have shown that the exact nature of the distribution of large wealth in a near-critical economy depends sensitively on the minute details of public policy, which are modeled by $\iota(w)$. Thus, it is possible that each of these countries has a qualitatively different subquadratic decay in agent density at large wealth.

\begin{figure}[htbp]
  \centering
  \includegraphics[scale=0.32]{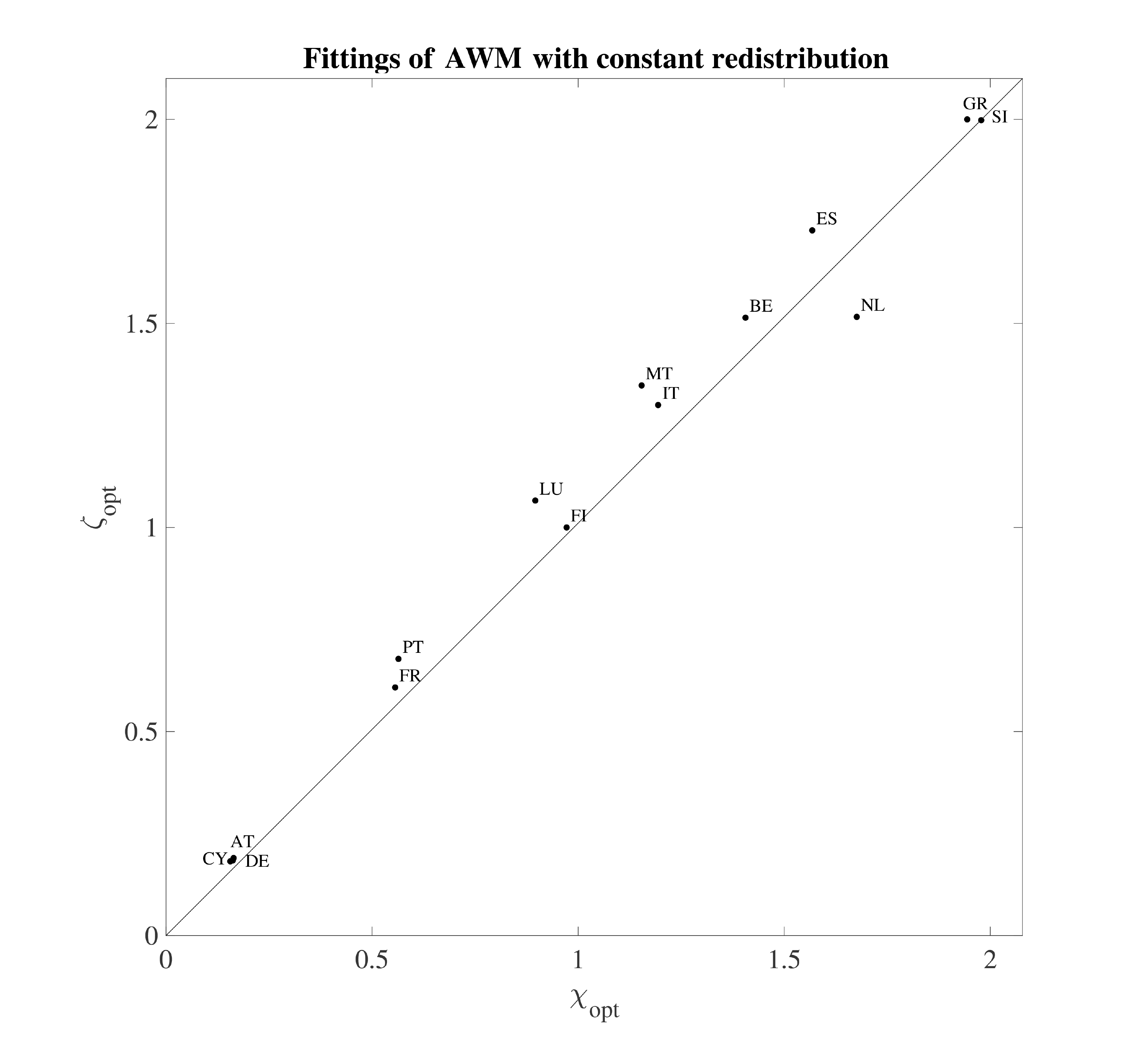}
  \caption{{\bf Plots of optimal redistribution parameters ($\chi_\text{opt}$) against optimal WAA parameters ($\zeta_\text{opt}$) for fourteen ECB countries.}}
  \label{fig:Chi_Zeta_plot}
\end{figure}

For some of the results of our fittings, the error was so small that the Lorenz curve given by our model was difficult to discriminate from that of the empirical data. For this reason, we will consider the performance of our model in terms of local error as well. We define the local error as the length of a line segment connecting the empirical data point $(f_j,l_j)$ to the model Lorenz curve, constructed so as to be perpendicular to the latter~\cite{AWM}. This section's models assumed that redistribution was constant in the AWM. Therefore, if the limit of redistribution was not exactly equal to the WAA parameter, the distribution of large wealth was assumed to be Gaussian. Despite the small average local error of these fittings, it is notable that for many countries, the vast majority of pointwise error in the fittings occurs in the tail. Four excellent examples of this relatively large error in the tail are provided in Figure \ref{fig:Fittings}. We conjecture that the pointwise error in the tail of the distribution is explained by the state of near criticality and the assumption of a Gaussian distribution at large wealth. Our work in Section \ref{sec: Criticality} shows that because these countries are near-critical, there could be a nonexponential, subquadratic decay which describes the tail of these countries distributions better than a Gaussian. 

\begin{figure}[t]
  \centering
  \includegraphics[scale=1.5]{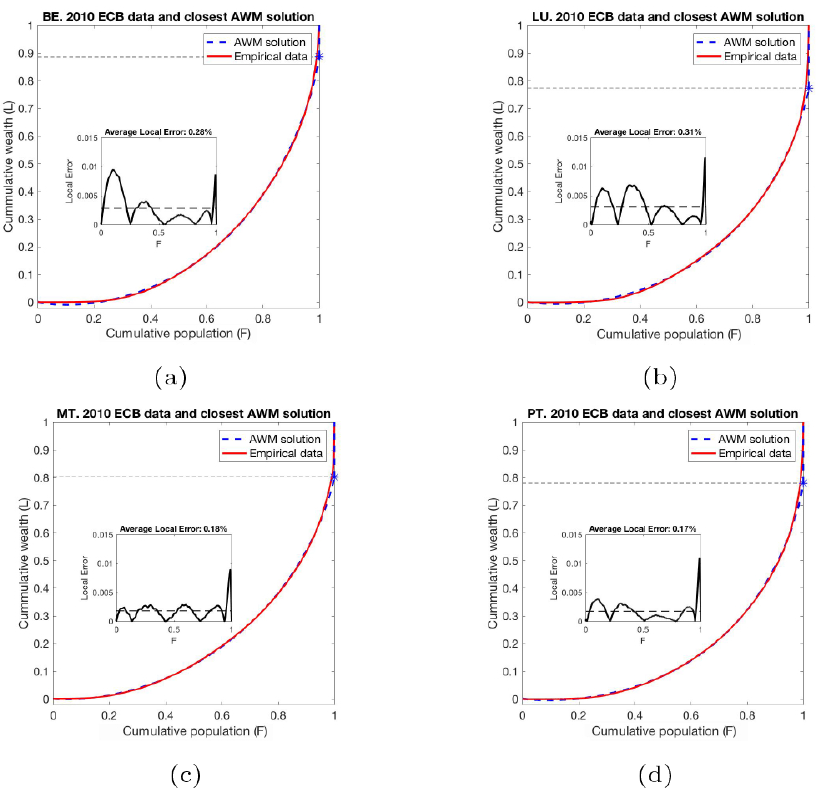}
\caption{{\bf Optimal fits of the AWM to ECB data for four countries.} For each country, we determined the parameters $(\chi, \zeta, \lambda)$ that minimize the $L^1$-norm of the difference between empirical and model Lorenz curves. Pointwise error is plotted within each figure. The fraction of oligarchical wealth can be estimated by  (\ref{eq:L_inf_AWM_specific}). Notably, the local error is relatively large in the tail, suggesting the presence of a nonexponential, subquadratic decay. (a): Belgium. (b): Luxembourg. (c): Malta. (d): Portugal.}
  \label{fig:Fittings}
\end{figure}


\section{Conclusion}\label{sec:conclusion}

This brings us to the overall conclusion of this work: that there is no universal form for most real-world economies' wealth distribution tails. We have shown that the nature of the asymptotic solution to the Fokker-Planck equation governing the AWM sensitively depends on one's choice in redistribution policy when an economy is near-critical (Section \ref{sec: Criticality}). When the AWM was fit to the wealth data of fourteen European Union countries, we found that each one was near-critical (Table \ref{tab:Fittings}). We conclude that the popular question of whether wealth decays like a Pareto distribution or an exponential distribution cannot be answered without first considering the policies of the country from which data were collected. This implies that there is no universal form for wealth distributions, at least at large wealth. Because of the exhibited sensitivity that a distribution of wealth has to the particularities of redistribution policy, we suggest a reframing of how the distribution of large wealth is studied. In particular, rather than fitting distributions to wealth data and observing how accurate those fittings may be, we suggest an emphasis on studying redistribution policy and its effects on the entire wealth distribution. This approach would more readily capture the integrodifferential nature of the equations governing wealth distributions~\cite{AWM}. The bulk of econophysics research, in tandem with the work described in our research, shows that this is a more scientific and well-posed approach to understanding a distribution of wealth.

In this work, we have extended the EYSM and AWM by generalizing redistribution to be a nearly arbitrary function of wealth~\cite{PhysicaA, AWM}. We showed that every subquadratic decay satisfying the assumptions of this work is possible by means of a progressive redistribution function with a wealth limit within a neighborhood of the WAA parameter. This extends the notion of criticality -- the phenomenon of the presence of oligarchy disappearing when constant redistribution is exactly equal to the WAA parameter -- to include a plethora of subquadratic decays other than the exponential considered by prior work~\cite{PhysicaA, AWM}. These include the lognormal and Pareto distributions. We note that the way that the redistribution function tends toward its asymptote governs the distribution of wealth. This implies that near-critical systems are extremely sensitive to the minutiae of redistribution policy. Moreover, this sensitive dependence of the nature of wealth distributions implies that the redistribution policy decisions of a society are more indicative of the distribution of large wealth than any underlying economic forces when that society is near criticality.

The fact that the asymptotic redistribution rate need not be exactly equal to the WAA parameter is of the utmost importance to policy decisions in global economics. We fit the AWM to empirical wealth data from fourteen European economies and found that all lie either just above or below criticality (Figure \ref{fig:Chi_Zeta_plot}).  Our work implies that the conversation about how large wealth is distributed may be ill-posed, as the distribution will sensitively depend on the policy decision specific to those countries. This implies that a universal distribution of wealth -- like that which has been sought for from Pareto to Piketty -- is likely a chimera. We emphasize that to understand a near-critical wealth distribution, one must thoroughly analyze large wealth redistribution policies that vary across societies. 

\section*{Acknowledgements}

For part of the time spent on this work, B.M.B. was visiting the Research and Training Center of the Central Bank of Armenia in 2017. The authors gratefully acknowledge the support and hospitality of the Central Bank of Armenia. We also would like to thank Hongyan Wang, Chengli Li, and Jie Li for their help with the model fittings presented in this work. 

\bibliographystyle{siamplain}
\bibliography{mybibtexfile} 

\appendix

\section{Class of functions satisfying our assumptions}\label{sec: satisfactory functions}

In this section, we prove that given a function $f(w)$ satisfying some loose conditions, the assumptions of this work will hold.

\begin{proposition} \label{prop: Assumption Class}
    If $f(w) \approx w^p \log^q(w)$, where $p>0,q\in\mathbb{R}$ or $p=0,q>1$, (\ref{eq:Assumption1})-(\ref{eq:Assumption3}) are satisfied. 
\end{proposition}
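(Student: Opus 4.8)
The plan is to verify the three conditions for the canonical representative $f(w)=w^{p}\log^{q}(w)$ of the asymptotic class -- the relations (\ref{eq:Assumption1})--(\ref{eq:Assumption3}) are asymptotic, and this is the natural representative for which $f'$ and $f''$ are controlled -- by computing $f'$ and $f''$ in closed form once and for all and then dispatching each condition through an elementary growth-rate comparison, taking logarithms and comparing powers of $\log(w)$ whenever an exponential appears. I would split into the cases $p>0$ and $p=0$, since the second is where the hypothesis $q>1$ becomes load-bearing. One bookkeeping point up front: for $0<p<1$ and for $p=0$ the function $f''(w)$ is eventually negative, so I would read (\ref{eq:Assumption2}) in the way it is used downstream, namely $|f''(w)|/[f'(w)]^{2}\to0$ (equivalently $f'(w)\gg\sqrt{|f''(w)|}$); when $f''<0$ this is automatic once $f'(w)>0$, so there only (\ref{eq:Assumption1}) and (\ref{eq:Assumption3}) require work.

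\textit{Case $p>0$.} Here $f'(w)=w^{p-1}\log^{q-1}(w)\,[\,p\log(w)+q\,]\approx p\,w^{p-1}\log^{q}(w)$, with positive leading coefficient $p$, which gives (\ref{eq:Assumption1}). Differentiating once more, $f''(w)$ equals $w^{p-2}\log^{q-2}(w)$ times a polynomial of degree at most two in $\log(w)$ with leading term $p(p-1)\log^{2}(w)$; hence $|f''(w)|\approx|p(p-1)|\,w^{p-2}\log^{q}(w)$ when $p\neq1$ and $|f''(w)|\approx q\,w^{-1}\log^{q-1}(w)$ when $p=1$, and in either case $|f''(w)|/[f'(w)]^{2}=O\!\big(w^{-p}\log^{-q}(w)\big)\to0$, which is (\ref{eq:Assumption2}). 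For (\ref{eq:Assumption3}) I note that $w^{2}/[f'(w)]^{2}\approx w^{4-2p}/\big(p^{2}\log^{2q}(w)\big)$ grows no faster than a fixed power of $w$, whereas $e^{f(w)}=e^{\,w^{p}\log^{q}(w)}$ outgrows every power of $w$ because $w^{p}\log^{q}(w)/\log(w)=w^{p}\log^{q-1}(w)\to\infty$; therefore $w^{2}/\big([f'(w)]^{2}e^{f(w)}\big)\to0$.

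\textit{Case $p=0$, $q>1$.} Now $f'(w)=q\,w^{-1}\log^{q-1}(w)>0$ for large $w$ since $q>0$, giving (\ref{eq:Assumption1}); and $f''(w)=q\,w^{-2}\log^{q-2}(w)\,[\,(q-1)-\log(w)\,]$, so $|f''(w)|\approx q\,w^{-2}\log^{q-1}(w)$ and thus $|f''(w)|/[f'(w)]^{2}\approx\tfrac1q\,\log^{1-q}(w)\to0$ exactly because $q-1>0$ -- here is precisely where the threshold enters, since at $q=1$ this ratio tends to $1$. For (\ref{eq:Assumption3}), $w^{2}/[f'(w)]^{2}\approx w^{4}/\big(q^{2}\log^{2q-2}(w)\big)$, while $e^{f(w)}=e^{\log^{q}(w)}$ dominates $w^{4}=e^{4\log(w)}$ because $\log^{q}(w)-4\log(w)=\log(w)\,[\,\log^{q-1}(w)-4\,]\to\infty$ for $q>1$; together with $\log^{2q-2}(w)\to\infty$ this gives $w^{2}/\big([f'(w)]^{2}e^{f(w)}\big)\to0$.

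The argument carries no deep obstacle; the real task is to organize the leading-order asymptotics of $f'$ and $f''$ cleanly enough that the borderline case $p=0$ is seen to succeed exactly when $q>1$. Two points need a little care: using the genuine leading term $\log^{q-1}(w)$ rather than a loose $\log^{q}(w)$ bound for $f''$ when $p=0$ (otherwise the $q>1$ threshold is misdiagnosed), and fixing at the outset the reading of (\ref{eq:Assumption2}) in the regime where $f''$ is negative. All limits invoked can be made rigorous by one pass of L'H\^{o}pital's rule, or by comparing exponents after taking logarithms.
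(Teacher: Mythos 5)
Your proof is correct and follows essentially the same route as the paper's: a case split on $p>0$ versus $p=0,\,q>1$, explicit computation of $f'$ and $f''$, and elementary limit comparisons for each of (\ref{eq:Assumption1})--(\ref{eq:Assumption3}). Your explicit observation that $f''$ is eventually negative for $0<p<1$ and for $p=0$, so that (\ref{eq:Assumption2}) should be read as $|f''(w)|/[f'(w)]^2\to 0$, is a small but welcome clarification of a point the paper's proof leaves implicit.
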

\begin{proof}
    {Case 1}; Assume that $p>0$ and $q\in\mathbb{R}$: 
        \begin{enumerate}
        \item By assumption, $f'(w) \approx ap\; w^{p-1}\log^q(w)$, which is positive for all large wealth. Therefore,  (\ref{eq:Assumption1}) holds.
        
        \item Consider the following limit:
        \begin{align*}
            \lim_{w\rightarrow\infty} \frac{f''(w)}{[f'(w)]^2} &=\frac{1}{a} \lim_{w\rightarrow\infty }\frac{p(p - 1)  \log^2(w) +(2 p - 1) q \log(w) +  q(q - 1) )}{w^p\log^q(w)[p\log(w)+q]^2}\\ &= \frac{p-1}{ap} \lim_{w\rightarrow\infty }\frac{1}{\log^q(w)w^p}\\ &= 0.
        \end{align*}
        Therefore,  (\ref{eq:Assumption2}) holds. 
        
        \item Consider the following limit:
        \begin{align*}
            \lim_{w\rightarrow\infty} \frac{w^2}{[f'(w)]^2}e^{-f(w)} &= \lim_{w\rightarrow\infty } \frac{\log^{2-2q}(w)e^{-aw^p\log^q(w)}}{a^2w^{2(p-2)}(p\log(w)+q)^2}\\ &=\frac{1}{a^2p^2}\lim_{w\rightarrow\infty }\frac{e^{-aw^p\log^q(w)}}{ w^{2(p-2)}\log^{2q}(w)}\\
            &=\frac{1}{a^2p^2}\lim_{w\rightarrow\infty }\frac{w^{-aw^p\log^{q-1}(w)}}{ w^{2(p-2)}\log^{2q}(w)}\\ &= 0.
        \end{align*}
        Therefore,  (\ref{eq:Assumption3}) holds.
    \end{enumerate}
    
    {Case 2}; Assume that $p=0$ and $q>1$:
    \begin{enumerate}
        \item By assumption, $f'(w)  \approx q\frac{\log^{q-1}(w)}{w}$,  which is positive for all large wealth. Thus,  (\ref{eq:Assumption1}) holds.
        
        \item Consider the following limit:
        \begin{align*}
            \lim_{w\rightarrow\infty} \frac{f''(w)}{[f'(w)]^2} = \lim_{w\rightarrow\infty} \bigg[\frac{q-1}{q\log^q(w)} -\frac{1}{\log^{q-1}(w)}\bigg] = 0.
        \end{align*}
        Thus,  (\ref{eq:Assumption2}) holds.
        
        \item Consider the following limit:
        \begin{align*}
            \lim_{w\rightarrow\infty}\frac{w^2}{[f'(w)]^2}e^{-f(w)} &=\lim_{w\rightarrow\infty} \frac{w^4}{q^2\log^{2(q-1)}}\exp[-\log^q(w)]\\
            &= \lim_{w\rightarrow\infty }\frac{w^{4-\log^{q-1}(w)}}{q^2 \log^{2(q-1)}(w)}\\&= 0.
        \end{align*}
        Thus,  (\ref{eq:Assumption3}) holds.
        
    \end{enumerate}
    Therefore, our assumptions are satisfied by the following class of functions: \newline $w^p\log^q(w)$, where either $p>0$ and $q\in\mathbb{R}$ or $p=0$ and $q>1$. 
\end{proof}

\section{Further notes on the consequences of our asymptotic assumptions}\label{sec: consequences of assumptions}

\hspace{10pt} In this section, we derive some facts which are used throughout this work. Lemmas \ref{lemma: A2}-\ref{lemma: A4} are consequences of our assumptions: (\ref{eq:Assumption1})-(\ref{eq:Assumption3}). We assume that wealth is sufficiently large that these assumptions are valid.
\begin{lemma}\label{lemma: A1}
    If $\lim_{w\rightarrow\infty} f(w) = \infty$ and $g'(w)\gg f'(w)$, then $g(w) \gg f(w)$.
\end{lemma}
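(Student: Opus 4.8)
The plan is to prove the contrapositive-flavored statement directly by a limit computation. We are given that $f(w)\to\infty$ and that $g'(w)\gg f'(w)$, i.e. $\lim_{w\to\infty} f'(w)/g'(w) = 0$, and we want $\lim_{w\to\infty} f(w)/g(w) = 0$. First I would observe that since $g'(w) \gg f'(w)$ and $f'(w) > 0$ for large $w$ (by assumption (\ref{eq:Assumption1})), it follows that $g'(w) > 0$ for large $w$ as well, and moreover $g(w)\to\infty$: indeed $g'(w) \geq f'(w)$ eventually, so $g(w) \geq g(w_0) + \int_{w_0}^w f'(x)\,dx = g(w_0) + f(w) - f(w_0) \to \infty$. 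Hence both $f(w)$ and $g(w)$ diverge to $+\infty$, which sets up an $\infty/\infty$ indeterminate form.

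Next I would apply L'H\^opital's rule to $\lim_{w\to\infty} f(w)/g(w)$. Since the hypotheses guarantee $g$ is eventually strictly increasing and unbounded, the rule applies and gives
\begin{align*}
    \lim_{w\to\infty}\frac{f(w)}{g(w)} = \lim_{w\to\infty}\frac{f'(w)}{g'(w)} = 0,
\end{align*}
where the last equality is precisely the hypothesis $g'(w)\gg f'(w)$ unpacked via the definition (\ref{eq:AsymptoticDominance}). This yields $g(w)\gg f(w)$ by definition, completing the argument.

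The only genuinely delicate point — and the main obstacle, such as it is — is checking that L'H\^opital's rule is legitimately applicable: one needs $g'$ to be nonzero in a neighborhood of infinity and the limit of the derivative ratio to exist. Both follow from the setup ($g' \gg f' > 0$ forces $g' > 0$ eventually, and the derivative ratio tends to $0$ by hypothesis), so this is a matter of stating the conditions carefully rather than a real difficulty. I would also remark that the result is purely an asymptotic calculus fact and does not actually invoke assumptions (\ref{eq:Assumption2})–(\ref{eq:Assumption3}); it is placed here because it is the workhorse for comparing growth rates of the various terms appearing in the large-wealth analysis.
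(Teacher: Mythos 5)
Your proof is correct, and it reaches the conclusion by citing L'H\^opital's rule where the paper instead writes out the underlying estimate by hand. Concretely, the paper fixes an arbitrary $M>0$, uses $g'(w)\gg f'(w)$ to get $g'(w)>Mf'(w)$ for $w>n_M$, integrates to obtain $g(w)>[g(n_M)-Mf(n_M)]+Mf(w)$, and then uses the hypothesis $f(w)\to\infty$ to kill the boundary terms $g(n_M)/f(w)$ and $f(n_M)/f(w)$, concluding $g(w)/f(w)>M$ eventually. That is essentially the textbook proof of the $\infty/\infty$ case of L'H\^opital's rule specialized to this situation, so the two arguments are mathematically the same idea; yours is shorter because it outsources the integration step to a named theorem, while the paper's version is self-contained and makes visible exactly where the hypothesis $\lim_{w\to\infty}f(w)=\infty$ enters (namely, to suppress the constants of integration). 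Your version does carry the obligation of verifying the rule's hypotheses --- $g'$ nonvanishing near infinity and the denominator unbounded --- and you discharge both. One shared caveat: the inference from $g'(w)\gg f'(w)$ and $f'(w)>0$ to $g'(w)>Mf'(w)$ (and hence $g'>0$) is strictly a convention about the sign implicit in $\gg$; the limit definition (\ref{eq:AsymptoticDominance}) alone only gives $|g'(w)|>Mf'(w)$. The paper's proof makes exactly the same tacit assumption, and in all of the paper's applications $g'$ is manifestly positive, so this is not a defect of your argument relative to the paper's. Your closing remark that assumptions (\ref{eq:Assumption2})--(\ref{eq:Assumption3}) are not needed here is also accurate.
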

\begin{proof}
    Because $g'(w) \gg f'(w)$ for all positive $M$, however large, there exists an $n_M>0$, where $w>n_M$ implies that $g'(w) > Mf'(w)$. This implies that
    \begin{align*}
        g(w) &= g(n_M) + \int_{n_M}^w dx\; g'(x),\\
        g(w) &> g(n_M) + M\int_{n_M}^w dx\; f'(x),\\
        g(w) &> [g(n_M)-Mf(n_M)] + Mf(w),\\
        \frac{g(w)}{f(w)} &> \frac{g(n_M)}{f(w)}- M\frac{f(n_M)}{f(w)} + M.
    \end{align*}
    Note that since we assume that $n_M$ is a constant and that $f(w)$ is an infinite function, for $w$ large, $\frac{f(n_M)}{f(w)}\approx \frac{g(n_M)}{f(w)}\approx 0$. Then at large wealth, we have that there exists an $n_M'>0$ such that for all $w>n_M'$, it is true that $g(w) >Mf(w)$. Thus, $g(w)\gg f(w)$.
\end{proof}

\begin{lemma}\label{lemma: A2}
    Under the assumptions listed in (\ref{eq:Assumption1})-(\ref{eq:Assumption3}), it can be proven that $\exp[f(w)]\gg \frac{w^2}{f'(w)}$. 
\end{lemma}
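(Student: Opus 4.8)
The plan is to bootstrap the hypotheses into the much stronger statement that $f$ eventually outgrows every constant multiple of $\log w$, so that $e^{f(w)}$ beats every power of $w$; once this is in hand, the extra factor $1/f'(w)$ in the claim (versus the factor $1/[f'(w)]^2$ in~(\ref{eq:Assumption3})) costs at most one power of $w$ and is therefore harmless. Notably, this route uses only~(\ref{eq:Assumption1}),~(\ref{eq:Assumption2}), and Lemma~\ref{lemma: A1}; Assumption~(\ref{eq:Assumption3}) is available but not actually needed.

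Concretely, I would proceed in three steps. \emph{Step 1.} Observe that $\frac{d}{dw}\!\left(\frac{1}{f'(w)}\right) = -\frac{f''(w)}{[f'(w)]^2}$, and that~(\ref{eq:Assumption2}) is precisely the statement $\frac{f''(w)}{[f'(w)]^2}\to 0$ (this is how the assumption is applied in Appendix~\ref{sec: satisfactory functions}, where $f''$ may be negative). A standard $\varepsilon$-estimate applied to $\frac{1}{f'(w)} = \frac{1}{f'(w_0)} + \int_{w_0}^{w}\!\left(\frac{1}{f'}\right)'\!dx$ then gives $\frac{1}{f'(w)} = o(w)$, and since $f'(w)>0$ by~(\ref{eq:Assumption1}) this is equivalent to $wf'(w)\to\infty$, i.e. $f'(w)\gg 1/w = \frac{d}{dw}\log w$. \emph{Step 2.} Apply Lemma~\ref{lemma: A1} with the roles of $g$ and $f$ there played by $f$ and $\log w$: since $\log w\to\infty$ and $f'(w)\gg(\log w)'$, we obtain $f(w)\gg\log w$. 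Hence for every $k$ there is a threshold beyond which $f(w)\ge k\log w$, so $e^{f(w)}\gg w^{k}$; in particular $e^{f(w)}\gg w^{4}$. \emph{Step 3.} Factor
\begin{align*}
  \frac{w^{2}/f'(w)}{e^{f(w)}} \;=\; \frac{1}{wf'(w)}\cdot\frac{w^{3}}{e^{f(w)}},
\end{align*}
and note the first factor tends to $0$ by Step~1 and the second tends to $0$ by Step~2; the product therefore tends to $0$, which is exactly the claim $e^{f(w)}\gg \frac{w^{2}}{f'(w)}$.

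The only genuinely delicate point is Step~1 — turning ``the derivative tends to $0$'' into ``the function is $o(w)$'' — but this is the routine fact that a function with $\phi'(w)\to 0$ satisfies $\phi(w)/w\to 0$, proved by splitting the integral of $\phi'$ at the point beyond which $|\phi'|<\varepsilon$; one must only be slightly careful that $f''$ can be negative, so that~(\ref{eq:Assumption2}) is read as $f''/[f']^{2}\to 0$ rather than literally requiring $\sqrt{f''}$. Everything after that is essentially automatic. I expect the conceptual choice that keeps the proof short is to resist manipulating~(\ref{eq:Assumption3}) directly and instead first establish the super-polynomial growth $f(w)\gg\log w$ via Lemma~\ref{lemma: A1}; the single factor $1/f'(w)\le w$ coming from Step~1 is then negligible against $e^{f(w)}$.
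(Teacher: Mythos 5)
Your proof is correct, but it takes a genuinely different route from the paper's. The paper works entirely downstream of (\ref{eq:Assumption3}): it weakens $\frac{w^2}{[f'(w)]^2}$ to $\frac{w}{f'(w)}$ to get $f'(w)e^{f(w)}\gg w$, integrates this via Lemma \ref{lemma: A1} to obtain $e^{f(w)}\gg w^2$ (equation (\ref{eq:w_squared})), and then closes with a geometric-mean step, $e^{f(w)}=\sqrt{e^{2f(w)}}\gg\sqrt{w^2\cdot \frac{w^2}{[f'(w)]^2}}=\frac{w^2}{f'(w)}$. You instead extract $wf'(w)\to\infty$ from (\ref{eq:Assumption2}) alone via the $o(w)$ estimate on $1/f'$, feed $f'(w)\gg 1/w$ into Lemma \ref{lemma: A1} to get the super-polynomial growth $e^{f(w)}\gg w^{k}$ for every $k$, and then split the target ratio as $\frac{1}{wf'(w)}\cdot\frac{w^{3}}{e^{f(w)}}$. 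In effect you reverse the paper's dependency chain: the paper proves Lemma \ref{lemma: A2}, then deduces $e^{f}\gg w$ (Lemma \ref{lemma: A3}) and $f'\gg 1/w$ (Lemma \ref{lemma: A4}) from it, whereas your Step 1 establishes Lemma \ref{lemma: A4} directly and your Step 2 strengthens Lemma \ref{lemma: A3}, with Lemma \ref{lemma: A2} falling out at the end. Your route buys two things: it never invokes (\ref{eq:Assumption3}), showing the lemma already follows from (\ref{eq:Assumption1})--(\ref{eq:Assumption2}); and it sidesteps the paper's intermediate comparison $\frac{w^2}{[f'(w)]^2}\gg\frac{w}{f'(w)}$, which as written presumes $f'(w)\ll w$ and is therefore delicate for superquadratic tails (though harmless there, since $\frac{w}{f'(w)}\to 0$ in that regime). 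The paper's route buys brevity, staying inside the algebra of the $\gg$ relation with a single integration. Your reading of (\ref{eq:Assumption2}) as $f''/[f']^{2}\to 0$ matches how the paper applies it in Appendix \ref{sec: satisfactory functions}, and your $\varepsilon$-argument correctly accommodates $f''<0$, so I see no gap.
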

\begin{proof}
By  (\ref{eq:Assumption3}), $e^{f(w)}\gg \frac{w^2}{[f'(w)]^2}$. Since $\frac{w^2}{[f'(w)]^2}\gg \frac{w}{f'(w)}$, we have through transitivity that $e^{f(w)}\gg \frac{w}{f'(w)}$. Thus, 
\begin{align*}
    e^{f(w)}&\gg \frac{w}{f'(w)},\\
    f'(w)e^{f(w)} &\gg w.
\end{align*}
Note that because $f'(w)e^{f(w)}\gg w$, both $f'(w)$ and $e^{f(w)}$ are necessarily infinite functions of wealth. Therefore, the assumptions of Lemma \ref{lemma: A1} are satisfied, and by integrating, we see that
\begin{align}
    e^{f(w)}\gg w^2.\label{eq:w_squared}
\end{align}
Taking the geometric mean of  (\ref{eq:Assumption3}) and  (\ref{eq:w_squared}), we arrive at our result.
\begin{align*}
    e^{f(w)} = \sqrt{e^{2f(w)}} \gg \sqrt{w^2 \frac{w^2}{[f'(w)]^2}} = \frac{w^2}{f'(w)}.
\end{align*}
\end{proof}

\begin{lemma}\label{lemma: A3}
    Under the assumptions listed in (\ref{eq:Assumption1})-(\ref{eq:Assumption3}), it can be proven that $\exp[f(w)]\gg w$. 
\end{lemma}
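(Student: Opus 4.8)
The plan is to prove Lemma \ref{lemma: A3} essentially for free from the two lemmas that precede it. First I would note that Lemma \ref{lemma: A2} already gives $\exp[f(w)] \gg \frac{w^2}{f'(w)}$, and since $f'(w) > 0$ by (\ref{eq:Assumption1}) one would like to conclude $\exp[f(w)] \gg w$ by comparing $\frac{w^2}{f'(w)}$ with $w$; but this only works directly if $f'(w)$ does not grow faster than $w$, so the cleanest route is instead to use (\ref{eq:w_squared}), namely $e^{f(w)} \gg w^2$, which was established inside the proof of Lemma \ref{lemma: A2}. From $e^{f(w)} \gg w^2$ and $w^2 \gg w$ (for large $w$), transitivity of the $\gg$ relation immediately yields $e^{f(w)} \gg w$.

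Concretely, I would write: by (\ref{eq:w_squared}) in the proof of Lemma \ref{lemma: A2}, $e^{f(w)} \gg w^2$. Since $\lim_{w\to\infty} w/w^2 = 0$, we have $w^2 \gg w$. The relation $\gg$ defined in (\ref{eq:AsymptoticDominance}) is transitive — if $g/h \to 0$ and $h/k \to 0$ then $g/k = (g/h)(h/k) \to 0$ — so combining these gives $e^{f(w)} \gg w$, as claimed. Alternatively, one can argue directly from Lemma \ref{lemma: A2}: $e^{f(w)} \gg \frac{w^2}{f'(w)}$, and because $f'(w)e^{f(w)} \gg w$ (shown in the proof of Lemma \ref{lemma: A2}) forces $f'(w) \to \infty$ or at least $f'(w)$ bounded below, one handles the case $f'(w)$ bounded and the case $f'(w)\to\infty$ separately; but this is needlessly complicated compared to the transitivity argument.

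There is essentially no obstacle here: the content has already been extracted in proving Lemma \ref{lemma: A2}, and this lemma is just a weaker, more convenient corollary. The only point requiring a word of care is to make sure one invokes $e^{f(w)} \gg w^2$ — which is the genuinely nontrivial intermediate step, itself resting on Lemma \ref{lemma: A1} applied to $f'(w)e^{f(w)} \gg w$ — rather than attempting to divide the statement of Lemma \ref{lemma: A2} by $f'(w)$, which could fail if $f'(w)$ grows faster than linearly (e.g. $f(w) = w^2$, where $\frac{w^2}{f'(w)} = \frac{w}{2} \not\gg w$). So the proof is two lines: cite (\ref{eq:w_squared}), note $w^2 \gg w$, apply transitivity of $\gg$.
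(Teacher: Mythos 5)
Your proof is correct and follows exactly the paper's argument: cite $e^{f(w)}\gg w^2$ (equation (\ref{eq:w_squared}) established in the proof of Lemma \ref{lemma: A2}) and conclude $e^{f(w)}\gg w^2\gg w$ by transitivity. Your cautionary remark about not dividing Lemma \ref{lemma: A2} by $f'(w)$ is a sensible observation, but the core route is identical to the paper's.
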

\begin{proof}
By  (\ref{eq:w_squared}) in our proof of Lemma \ref{lemma: A2}, $\exp[f(w)]\gg w^2\gg w$. 
\end{proof}

\begin{lemma}\label{lemma: A4}
    Under the assumptions listed in (\ref{eq:Assumption1})-(\ref{eq:Assumption3}), it can be proven that $f'(w)\gg \frac{1}{w}$. 
\end{lemma}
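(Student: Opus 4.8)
The plan is to obtain the result from assumption (\ref{eq:Assumption2}) alone, using the observation that $1/f'$ is, up to a sign, an antiderivative of $f''/[f']^2$, followed by an integrate-the-derivative estimate of exactly the type used in the proof of Lemma \ref{lemma: A1}. Since $f'(w)\gg 1/w$ is by definition the statement $\lim_{w\to\infty}\frac{1}{w\,f'(w)}=0$, it suffices to prove that $h(w):=1/f'(w)$ — which is positive and well defined for $w$ large by (\ref{eq:Assumption1}) — satisfies $h(w)/w\to 0$.

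First I would square the ratio in (\ref{eq:Assumption2}) to get $\dfrac{f''(w)}{[f'(w)]^2}\to 0$ as $w\to\infty$, and note $\dfrac{d}{dw}\!\left(\dfrac{1}{f'(w)}\right)=-\dfrac{f''(w)}{[f'(w)]^2}$, so that $h'(w)\to 0$. The core step is then an elementary estimate mirroring Lemma \ref{lemma: A1} (in effect a form of L'H\^{o}pital's rule): given $\epsilon>0$, choose $W$ with $|h'(x)|<\epsilon$ for all $x>W$; then $h(w)=h(W)+\int_W^w h'(x)\,dx$ gives $0<h(w)\le h(W)+\epsilon(w-W)$ for $w>W$, so dividing by $w$ and letting $w\to\infty$ yields $\limsup_{w\to\infty}h(w)/w\le\epsilon$. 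As $\epsilon$ was arbitrary and $h>0$, this forces $h(w)/w\to 0$, i.e. $\frac1w\ll f'(w)$, which is the claim. This one estimate simultaneously handles the case where $1/f'$ stays bounded (where the conclusion is immediate) and the case where it diverges (where one could instead quote Lemma \ref{lemma: A1} with $g(w)=w$).

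I expect the only real subtlety to be bookkeeping about the sign of $f''$: if $f''(w)$ is not eventually nonnegative, the square root in (\ref{eq:Assumption2}) should be read as $|f''(w)|^{1/2}$, whence $|f''(w)|/[f'(w)]^2\to 0$ and the estimate above applies verbatim to $h'(w)=-f''(w)/[f'(w)]^2$. Apart from that the argument is routine; it is worth noting that, unlike Lemmas \ref{lemma: A2} and \ref{lemma: A3} (which relied on (\ref{eq:Assumption3})), this consequence of our assumptions uses essentially only (\ref{eq:Assumption2}).
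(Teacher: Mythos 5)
Your proof is correct, but it takes a genuinely different route from the paper's. The paper deduces Lemma \ref{lemma: A4} from Lemma \ref{lemma: A3}: from $e^{f(w)}\gg w$ it passes to $f(w)\gg\log(w)$ and then differentiates that limit via L'H\^{o}pital's rule to conclude $\lim_{w\rightarrow\infty}\frac{1/w}{f'(w)}=0$. You instead work entirely from (\ref{eq:Assumption2}): squaring gives $f''(w)/[f'(w)]^2\rightarrow 0$, you recognize this quantity as $-\frac{d}{dw}\big[1/f'(w)\big]$, and an elementary integrate-the-derivative estimate (the same device as in Lemma \ref{lemma: A1}) shows that $h(w)=1/f'(w)$ grows sublinearly, i.e.\ $1/(wf'(w))\rightarrow 0$. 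Your version has two concrete advantages. First, it is logically self-contained where the paper's final step is not: L'H\^{o}pital's rule runs in the direction ``existence of $\lim g'/h'$ implies the value of $\lim g/h$,'' not the converse, so passing from $\log(w)/f(w)\rightarrow 0$ to $\frac{1/w}{f'(w)}\rightarrow 0$ requires exactly the kind of mean-value/integration argument you supply; your proof is the rigorous substitute for that step. Second, it isolates the hypotheses actually needed --- only (\ref{eq:Assumption1}) (positivity of $f'$, hence of $h$) and (\ref{eq:Assumption2}) --- whereas the paper routes through Lemma \ref{lemma: A3} and therefore through (\ref{eq:Assumption3}). Your remark about reading (\ref{eq:Assumption2}) with $|f''(w)|^{1/2}$ when $f''$ may change sign is the correct interpretation and costs nothing. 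What the paper's approach buys in exchange is only brevity and the reuse of an already-established lemma.
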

\begin{proof}
    By Lemma \ref{lemma: A3}, 
    \begin{align*}
        \exp[f(w)] &\gg w,\\
        f(w) &\gg \log(w).
    \end{align*}
    In particular,
    \begin{align*}
      0=  \lim_{w\rightarrow\infty}\frac{\log(w)}{f(w)} = \lim_{w\rightarrow\infty}\frac{\frac{1}{w}}{f'(w)} ,
    \end{align*}
    where we have applied L'H\^{o}pital's rule to arrive at our conclusion. 
\end{proof}

\section{Reduction of the steady-state EYSM Fokker-Planck equation}\label{sec: Approximations}

\hspace{10pt} In this appendix, we will derive  (\ref{eq:LargeWealthLogPw}) using the assumptions provided in (\ref{eq:Assumption1})-(\ref{eq:Assumption3}). First, let us rearrange the steady-state EYSM Fokker-Planck equation with general redistribution --  (\ref{eq:FokkerPlanckSteadyState}):
\begin{align}
    \frac{d}{dw}[\log P] &= \frac{\frac{T}{N}-\chi(w)w - \zeta\big[\frac{2}{\mu}\big(B-\frac{w^2}{2}A\big) + (1-2L)w\big] - wAP}{B+\frac{w^2}{2}A} \label{eq:Log_ODE}.
\end{align}
By  Lemma \ref{lemma:ParetoPotentialApproximation}, we can approximate the Pareto-Lorenz potentials on $(M,\infty)$ by the following functional forms:
\begin{align}
    A(w) = \frac{1}{N}\int_w^\infty dx\;P(x) &\approx \frac{1}{N}\frac{1}{f'(w)}P(w), \label{eq:AwLargeWealth}\\
    L(w) = L_\infty - \frac{1}{W}\int_w^\infty dx\; xP(x) &\approx L_\infty - \frac{1}{W}\frac{w}{f'(w)}P(w),\label{eq:LwLargeWealth}\\
    B(w) = B_\infty - \frac{1}{N}\int_w^\infty dx\;\frac{x^2}{2}P(x) &\approx B_\infty -\frac{1}{2N}\frac{w^2}{f'(w)}P(w).\label{eq:BwLargeWealth}
\end{align}
The process by which  (\ref{eq:Log_ODE}) reduces to  (\ref{eq:LargeWealthLogPw}) is rather arduous and requires the application of the assumptions listed in (\ref{eq:Assumption1})-(\ref{eq:Assumption3}) as well as their consequences, which are provided in Section \ref{sec: consequences of assumptions}. For this reason, we will prove three lemmas (Lemmas \ref{lemma: B1}-\ref{lemma: B3}). Then, in Proposition \ref{prop: A1}, we will show that these three lemmas can be used to derive  (\ref{eq:LargeWealthLogPw}) as a large-wealth approximation of $P(w)$. 

\begin{lemma}\label{lemma: B1}
At large wealth,
    \begin{align*}
        B \pm \frac{w^2}{2}A \approx B_\infty.
    \end{align*}
\end{lemma}
\begin{proof}
Consider this quantity in the context of our asymptotic approximations of the Pareto-Lorenz potentials,
\begin{align*}
    B \pm \frac{w^2}{2}A &\approx B_\infty - \frac{1}{2N}\frac{w^2}{f'(w)}P(w) \pm \frac{1}{2N}\frac{w^2}{f'(w)}P(w)\\
    &\approx B_\infty +\frac{C(-1\pm 1)}{2N}\frac{w^2}{f'(w)}e^{-f(w)}\\
    &= \bigg[B_\infty e^{f(w)}+\frac{C(-1\pm1)}{2N}\frac{w^2}{f'(w)}\bigg]e^{-f(w)}\\
    &\approx B_\infty,
\end{align*}
where we have used that $e^{f(w)}\gg\frac{w^2}{f'(w)}$, which is true by Lemma \ref{lemma: A2}.
\end{proof}

\begin{lemma}\label{lemma: B2}
    At large wealth, for any constant $c$,
    \begin{align*}
        wAP \ll c.
    \end{align*}
\end{lemma}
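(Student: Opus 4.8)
The plan is to substitute the large-wealth approximations for $A(w)$ and $P(w)$, reduce the assertion to the statement that one explicit expression tends to zero, and then read that off from Lemma~\ref{lemma: A2}. Since the quantity $wAP$ is being compared with an arbitrary nonzero constant $c$, the claim $wAP \ll c$ is simply the claim that $wA(w)P(w) \to 0$ as $w\to\infty$.

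First I would use the approximation \eqref{eq:AwLargeWealth}, namely $A(w) \approx \frac{1}{N}\frac{1}{f'(w)}P(w)$, together with the fact that on any bounded region the oligarchic term $cW\Xi(w)$ in \eqref{eq: P_Definition} contributes nothing (by the stated properties $\int_0^\infty dw\,\Xi(w) = 0$ and $\int_a^b dw\,\Xi(w)w = 0$ for finite $a,b$, the distribution $\Xi$ is supported only ``at infinity''), so that $P(w) \approx Ce^{-f(w)}$ at large but finite wealth. Combining these,
\begin{align*}
    wA(w)P(w) \approx \frac{1}{N}\frac{w}{f'(w)}P(w)^2 \approx \frac{C^2}{N}\frac{w}{f'(w)}e^{-2f(w)}.
\end{align*}
As $C^2/N$ is a fixed positive constant, it remains only to show $\frac{w}{f'(w)}e^{-2f(w)} \to 0$.

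This follows from Lemma~\ref{lemma: A2}, which gives $e^{f(w)} \gg \frac{w^2}{f'(w)}$; since trivially $\frac{w}{f'(w)} \ll \frac{w^2}{f'(w)}$, transitivity of $\ll$ yields $\frac{w}{f'(w)} \ll e^{f(w)}$, i.e. $\frac{w}{f'(w)}e^{-f(w)} \to 0$. Because $f$ is eventually positive (indeed $f(w)\to\infty$ under the assumptions, cf. Appendix~\ref{sec: consequences of assumptions}), we have $e^{-2f(w)} \le e^{-f(w)}$ for all large $w$, so
\begin{align*}
    0 \le \frac{w}{f'(w)}e^{-2f(w)} \le \frac{w}{f'(w)}e^{-f(w)} \longrightarrow 0,
\end{align*}
and hence $wAP \to 0 \ll c$ for any nonzero constant $c$. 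There is no genuine obstacle here; the only points deserving care are the (routine) justification that $\Xi(w)$ may be dropped from the pointwise estimate of $A(w)P(w)$ at finite wealth, and — as throughout this section — that the approximations \eqref{eq:AwLargeWealth}–\eqref{eq:BwLargeWealth} supplied by Lemma~\ref{lemma:ParetoPotentialApproximation} are valid under \eqref{eq:Assumption1}–\eqref{eq:Assumption3}.
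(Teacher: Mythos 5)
Your proof is correct, but it takes a different route from the paper's. You substitute the asymptotic form $A(w)\approx \frac{1}{N}\frac{P(w)}{f'(w)}$ from (\ref{eq:AwLargeWealth}) to get $wAP \approx \frac{C^2}{N}\frac{w}{f'(w)}e^{-2f(w)}$ and then invoke Lemma \ref{lemma: A2} ($e^{f}\gg w^2/f'$) together with $f\to\infty$. The paper does something cheaper: it keeps $A(w)$ as is, uses only the trivial bound $0\le A(w)\le 1$, writes $c+wAP=\bigl[ce^{f(w)}+\frac{C}{N}wA(w)\bigr]e^{-f(w)}$, and finishes with Lemma \ref{lemma: A3} ($e^{f}\gg w$). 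The paper's argument is more robust in that it does not need the integral approximation of $A$ (Lemma \ref{lemma:ParetoPotentialApproximation}) to hold at the point in question — boundedness of $A$ suffices — whereas yours buys a strictly stronger conclusion ($wAP$ decays like $e^{-2f}$ rather than merely $e^{-f}$ times something subexponential), which is more than the lemma requires. Both are valid; your remark that the $\Xi$ term can be dropped pointwise at finite wealth is consistent with how the paper itself treats $P\approx Ce^{-f}$ throughout Appendix \ref{sec: Approximations}, and your restriction to $c\neq 0$ is the right reading of the statement.
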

\begin{proof} We will prove the equivalent condition that $c + wAP \approx c$ at large wealth:
\begin{align*}
     c+ wAP = c + \frac{C}{N} we^{-f(w)} \int_0^w dx\; P(x) &= \bigg[c e^{f(w)} + \frac{C}{N} w A(w) \bigg]e^{-f(w)}.
\end{align*}
Recall that $A(w)$ is bounded by $[0,1]$ for all $w$, and our assumptions implied that $e^{f(w)}\gg w$, which was proven in Lemma \ref{lemma: A3}. Then
\begin{align*}
c\leq \bigg[c e^{f(w)} + \frac{C}{N} w A(w) \bigg]e^{-f(w)}&\leq  \bigg[c e^{f(w)} + \frac{C}{N} w \bigg]e^{-f(w)}\approx c.
\end{align*}
Therefore, $c + wAP \approx c$.
\end{proof}

\begin{lemma}\label{lemma: B3}
    At large wealth,
    \begin{align*}
        \zeta\bigg[\frac{2}{\mu}\bigg(B-\frac{w^2}{2}A\bigg) + (1-2L)w\bigg] &\approx \zeta\bigg[\frac{2B_\infty}{\mu} + (1-2L_\infty)w \bigg].
    \end{align*}
\end{lemma}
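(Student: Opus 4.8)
The plan is to split the bracketed quantity into its ``constant'' part $\tfrac{2}{\mu}\big(B-\tfrac{w^2}{2}A\big)$ and its linear part $(1-2L)w$, estimate each separately, and then recombine. The constant part is handled immediately by Lemma~\ref{lemma: B1}, which gives $B-\tfrac{w^2}{2}A\approx B_\infty$; since $B_\infty$ is a fixed positive constant, this means $B-\tfrac{w^2}{2}A\to B_\infty$. Reading off the intermediate line of the proof of Lemma~\ref{lemma: B1} with the minus sign, one gets the sharper statement
\begin{align*}
    B-\frac{w^2}{2}A - B_\infty \approx -\frac{C}{N}\,\frac{w^2}{f'(w)}\,e^{-f(w)},
\end{align*}
which tends to $0$ by Lemma~\ref{lemma: A2} (that lemma yields $e^{f(w)}\gg w^2/f'(w)$).

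For the linear part I would use the large-wealth form of $L$ in~(\ref{eq:LwLargeWealth}). Applying Lemma~\ref{lemma:ParetoPotentialApproximation} with $m=1$ to $L_\infty-L(w)=\tfrac{1}{W}\int_w^\infty dx\; xP(x)$ gives $L_\infty-L(w)\approx \tfrac{C}{W}\tfrac{w}{f'(w)}e^{-f(w)}$, and multiplying through by $w$ yields
\begin{align*}
    \big(L_\infty-L(w)\big)w \approx \frac{C}{W}\,\frac{w^2}{f'(w)}\,e^{-f(w)}\ \longrightarrow\ 0,
\end{align*}
again by Lemma~\ref{lemma: A2}. Hence $(1-2L)w-(1-2L_\infty)w = 2\big(L_\infty-L\big)w\to 0$.

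Combining the two estimates, the difference between the two bracketed expressions is
\begin{align*}
    \frac{2}{\mu}\bigg(B-\frac{w^2}{2}A-B_\infty\bigg) + 2\big(L_\infty-L\big)w\ \longrightarrow\ 0 .
\end{align*}
To promote ``difference $\to 0$'' to the asymptotic equality $\approx$, I would note that the target quantity $\tfrac{2B_\infty}{\mu}+(1-2L_\infty)w$ is eventually bounded away from $0$ in absolute value: if $L_\infty=\tfrac12$ it equals the positive constant $\tfrac{2B_\infty}{\mu}$, and otherwise it diverges linearly. Dividing the vanishing difference by this nonvanishing quantity forces the ratio of the two brackets to $1$, and multiplying by the positive constant $\zeta$ gives the claim.

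I expect the only real subtlety to be the asymptotic bookkeeping when the two ``$\approx$'' estimates are added, since adding asymptotic equivalents is legitimate only in the absence of cancellation. Here it is safe: the two correction terms in the $B-\tfrac{w^2}{2}A-B_\infty$ estimate are both negative and of the same order (so their sum is genuinely $\approx -\tfrac{C}{N}\tfrac{w^2}{f'(w)}e^{-f(w)}$, exactly as in Lemma~\ref{lemma: B1}), while the $L$-correction enters additively and in isolation, so no hidden cancellation occurs. Verifying this, together with the degenerate case $L_\infty=\tfrac12$ in which the linear term of the target disappears, is the one place where care is needed; everything else is a direct application of Lemmas~\ref{lemma: B1}, \ref{lemma: A2}, and~\ref{lemma:ParetoPotentialApproximation}.
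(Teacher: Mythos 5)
Your proof is correct and follows essentially the same route as the paper's: substitute the large-wealth forms of $A$, $L$, $B$ from Lemma~\ref{lemma:ParetoPotentialApproximation}, invoke Lemma~\ref{lemma: B1} for the $B-\tfrac{w^2}{2}A$ piece, and discard the residual $\tfrac{w^2}{f'(w)}e^{-f(w)}$ correction via Lemma~\ref{lemma: A2}. The only difference is that you explicitly justify upgrading ``difference tends to zero'' to the ratio-form $\approx$ by checking that $\tfrac{2B_\infty}{\mu}+(1-2L_\infty)w$ is eventually bounded away from zero (including the $L_\infty=\tfrac12$ case), a step the paper leaves implicit.
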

\begin{proof}
Consider this quantity in the context of our asymptotic approximations of the Pareto-Lorenz potentials, stated in (\ref{eq:AwLargeWealth})-(\ref{eq:BwLargeWealth}). Applying Lemma \ref{lemma: B1},
\begin{align*}
\begin{split}
    \zeta\bigg[\frac{2}{\mu}\bigg(B-\frac{w^2}{2}A\bigg) + (1-2L)w\bigg] \approx& \zeta\bigg[\frac{2B_\infty}{\mu} + (1-2L_\infty)w+ \frac{2C}{W}\frac{w^2}{f'(w)}e^{-f(w)}\bigg]
\end{split}\\
\approx& \zeta\bigg[\frac{2B_\infty}{\mu} + (1-2L_\infty)w \bigg],
\end{align*}
where we have used that $e^{f(w)}\gg \frac{w^2}{f'(w)}$, which was proven in Lemma \ref{lemma: A2}.
\end{proof}

We are now equipped to use Lemmas \ref{lemma: B1}-\ref{lemma: B3} to prove that $f(w)=-\log[P(w)]$ reduces to the functional form provided in  (\ref{eq:LargeWealthLogPw}).
\begin{proposition}
At large wealth,
\begin{align*}
    f(w) &\approx \frac{1}{B_\infty}\int^w dx\; \chi(x)x +\frac{\zeta(1 - 2L_\infty)}{2B_\infty}w^2+ \frac{2\zeta B_\infty  - \frac{T}{N}\mu }{B_\infty \mu}w.
\end{align*}\label{prop: A1}
\end{proposition}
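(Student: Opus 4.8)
The plan is to start from the rearranged steady-state equation (\ref{eq:Log_ODE}) and substitute the assumed form $P(w)\approx Ce^{-f(w)}$ from (\ref{eq: P_Definition}); the generalized-function piece $cW\Xi(w)$ is supported at the oligarch and contributes nothing to the smooth bulk of the tail, so on $(M,\infty)$ we have $\frac{d}{dw}\log P\approx -f'(w)$. The right-hand side of (\ref{eq:Log_ODE}) is then simplified term by term using the three lemmas of this appendix.

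First I would replace the denominator $B+\frac{w^2}{2}A$ by $B_\infty$ via Lemma \ref{lemma: B1}. Next, in the numerator, I would replace the WAA bracket $\zeta\big[\frac{2}{\mu}(B-\frac{w^2}{2}A)+(1-2L)w\big]$ by $\zeta\big[\frac{2B_\infty}{\mu}+(1-2L_\infty)w\big]$ via Lemma \ref{lemma: B3}, and discard the term $wAP$ via Lemma \ref{lemma: B2}. This produces
\begin{align*}
    f'(w)\approx\frac{1}{B_\infty}\Big[\chi(w)w+\zeta(1-2L_\infty)w+\frac{2\zeta B_\infty}{\mu}-\frac{T}{N}\Big].
\end{align*}
Integrating this relation in $w$ and rewriting the coefficient of the linear term as $\frac{2\zeta B_\infty-\frac{T}{N}\mu}{B_\infty\mu}$ gives exactly the claimed expression, with the lower limit of integration absorbed into the indefinite integral and into $\approx$.

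The step that actually requires justification is the integration: an asymptotic equality of derivatives does not in general pass to the antiderivatives. Here it does, because (\ref{eq:Assumption1}) forces $f'>0$, hence the right-hand side above is eventually positive, and the assumptions (via Lemma \ref{lemma: A3}, $e^{f(w)}\gg w$) force $f\to\infty$, hence its antiderivative is unbounded; a routine $\varepsilon$-splitting of $\int^w o(g)$ against $\int^w g$ then shows the relative error vanishes and the bounded constant of integration is negligible. I expect this to be the main obstacle, together with the bookkeeping needed to confirm that each discarded quantity --- the corrections inside Lemmas \ref{lemma: B1} and \ref{lemma: B3}, and the factor $wAP$ --- is subdominant not merely on its own but relative to the entire surviving numerator after division by $B_\infty$; this is precisely where the a priori hypothesis $\frac{d\chi}{dw}\ll w[\chi(w)+\alpha w+\beta]^2$ and the constraints (\ref{eq:Assumption1})--(\ref{eq:Assumption3}) are used.
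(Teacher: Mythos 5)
Your proposal is correct and follows essentially the same route as the paper's own proof: simplify the right-hand side of (\ref{eq:Log_ODE}) using Lemmas \ref{lemma: B1}--\ref{lemma: B3}, identify $\frac{d}{dw}\log P$ with $-f'(w)$, and integrate. Your additional remarks --- that the $cW\Xi(w)$ term plays no role on the smooth tail, and that passing the asymptotic equality from $f'$ to $f$ requires $f\to\infty$ (via Lemma \ref{lemma: A3}) and an $\varepsilon$-splitting of the integral --- are points the paper's proof leaves implicit, so your version is if anything slightly more careful.
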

\begin{proof}
By Lemmas \ref{lemma: B1}-\ref{lemma: B3},
\begin{align}
    \frac{d}{dw}[\log(P)]&= \frac{\frac{T}{N}-\chi(w)w - \zeta\big[\frac{2}{\mu}\big(B-\frac{w^2}{2}A\big) + (1-2L)w\big] - wAP}{B+\frac{w^2}{2}A}\\ &\approx  
    \frac{\frac{T}{N}-\chi(w)w - \zeta\big[\frac{2B_\infty}{\mu} + (1-2L_\infty)w \big] }{B_\infty}\nonumber \\
        &= \frac{1}{B_\infty}\bigg(\frac{T}{N}-\chi(w)w\bigg)  - \zeta\bigg[\frac{2}{\mu}+ \frac{1}{B_\infty}(1-2L_\infty)w\bigg].
\end{align}
Note that 
\begin{align*}
    \frac{d}{dw}\big(\log P\big)= \frac{d}{dw}[-f(w)] = -f'(w).
\end{align*}
Therefore, we find that the tail of the distribution satisfies
\begin{align*}
    -f'(w)&\approx  -\frac{1}{B_\infty}\chi(w)w -\frac{\zeta(1 - 2L_\infty)}{B_\infty}w +\frac{2\zeta B_\infty  - \frac{T}{N}\mu }{B_\infty \mu},\\
    f(w) &\approx \frac{1}{B_\infty}\int^w dx\; \chi(x)x +\frac{\zeta(1 - 2L_\infty)}{2B_\infty}w^2+ \frac{2\zeta B_\infty  - \frac{T}{N}\mu }{B_\infty \mu}w,
\end{align*}
where we have disregarded a subdominant constant of integration.  Thus, we have arrived at our result. 

\end{proof}

\end{document}